\newcommand\abs[1]{\ensuremath{\lvert#1\rvert}}
\newcommand\ie{\textit{i.e.}}
\newcommand\cM{\ensuremath{\mathcal M}}
\def\cA{\mathcal{A}}
\def\cB{\mathcal{B}}
\def\cC{\mathcal{C}}
\def\cM{\mathcal{M}}
\def\cU{\mathcal{U}}
\def\cZ{\mathcal{Z}}
\title{Efficient random sampling of binary and unary-binary trees \textit{via}
holonomic equations}
\author{Axel Bacher$^\dagger$\and Olivier Bodini$^\ddagger$\and Alice Jacquot$^\ddagger$}
\date\today
\begin{document}

\maketitle

{\footnotesize
\noindent $\dagger$ : Research Institute for Symbolic Computation (RISC),
Johannes Kepler University,
Altenbergerstraße 69,
A-4040 Linz, Austria. E-mail: \texttt{abacher@risc.jku.at}\\
$\ddagger$ :{Universit\'e Paris 13, Sorbonne Paris Cit\'e, LIPN, CNRS(UMR 7030),
F-93430, Villetaneuse, France. E-mail: \texttt{bodini@lipn.fr},
\texttt{jacquot@lipn.fr}.}
}

\begin{abstract}
We present a new uniform random sampler for binary trees with~$n$ internal
nodes consuming $2n + \Theta(\log(n)^2)$ random bits on average. This makes it
quasi-optimal and out-performs the classical Remy algorithm. We also present a
sampler for unary-binary trees with~$n$ nodes taking $\Theta(n)$ random bits
on average. Both are the first linear-time algorithms to be optimal up to a
constant.
\end{abstract}

\noindent \textbf{Keywords:} Random Sampling, Remy Algorithm, Holonomic Equation, Plane Trees

\section{Introduction}

Rooted Plane trees are central data structures in computer science, and have been widely studied in both mathematics and computer science. They are a natural way of representing hierarchy and arise in a huge number of algorithms such as depth first search, sorting algorithms and search algorithms. 
In this paper, we focus on uniform random sampling of Catalan and Motzkin trees (\ie, rooted plane binary tees and rooted plane unary-binary tees). 
More precisely, we address the search of a algorithm which returns uniformly a tree having $n$ nodes with minimal cost in terms of random bit, while keeping good time and space complexities. 
Contrariwise with the uniform-real-variable model where the measure of complexity is the number of calls to iid uniform real variables $U$ on the compact $[0,1]$, we deal with the realistic and tractable random-bit model introduced by Von Neumann and further developed by Knuth and Yao. In this model, the unit of complexity is the random bit. For obvious reasons, this notion is much more natural in computer science and coherent with Shannon information entropy \cite{shannon1948mathematical}. This notion have already been investigated \cite{FlPeSo11,conf/wsc/Duchon11}
It is clear that the uniform random sampler of an object $\gamma$ inside a set of cardinality $C$ needs at least $\ln_2(C)$ random bits. 


 The efficient (in time space and random bits) sampling of trees is of central interest in many domain. Since these structures appear in numerous settings (data structures, bioinformatics, probabilities,...), an efficient uniform drawing enables testing, conjecturing, and simulating important properties. 

The generation of trees is quite a long story. The classical ways of sampling trees include the recursive method \cite{FZV94}, the Boltzmann method \cite{bolt1}, bijection with words \cite{Alonso96alinear,Alonso1994529}, bijection with walks \cite{Barcucci19953} and, for Catalan trees, R\'emy's Algorithm \cite{Remy85} and Galton-Watson branching processes. 

Rémy's Algorithm is certainly the most efficient algorithm until now. It consists to grow up a tree by local transformations that we can call grafting. Our first contribution is to reinterpret this algorithm as based on a holonomic equation (linear
differential equations with polynomial coefficients) over Catalan trees. This algorithm runs in linear time and space if we consider that uniformly choosing a node over $n$ in a tree can be done in constant time, and uses $\Theta ( n \log n )$ random bits in every case. 
We improve this algorithm such that it still runs in linear time in average and only uses a nerly optimum number of random bits, in average too. 

Actually this method is not limited to Catalan trees: there is a similar, though more intricate, holonomic equation over Motzkin tree. We give a combinatorial interpretation to this equation, which gives us a way to grow a Motzkin tree by local modifications. Thus, we can extend our version of Rémy's Algorithm to Motzkin trees, in average linear time and random bits. 

We present in the first section the bijections on catalan and Motzkin trees leading to holonomic specifications, then in Section~\ref{sec:random} we give our sampler and analyse them. Finally, in Section~\ref{sec:weight} we give a bivariate specification for weighted unary-binary trees. 

\section{Holonomic specification of trees}

\subsection{Holonomic equations}


The goal of this section is to find combinatorial specifications~\cite{livre} for the
classes of binary and Motzkin trees based on holonomic equations (linear
differential equations with polynomial coefficients) on their generating
functions rather than algebraic equations. In other words, these
specifications may involve pointing but no branching.

Finding a holonomic equation from an algebraic one can be done automatically
with a computer algebra system (for instance Maple with the \texttt{gfun}
package~\cite{SaZi94}). In the cases of binary and Motzkin trees, these
equations are:
\begin{align*}
B(z) + (z - 4z^3)B'(z) &= 2z\text;\\
(1 - z) M(z) + (z - 2z^2 - 3z^3) M'(z) &= 2z\text.
\end{align*}
For $A = B$ or $M$, set $A^\bullet(z) = zA'(z)$ and $A^\star(z) = A(z) +
A^\bullet(z)$. The above equations rewrite into:
\begin{align} \label{bdiff}
B^\star(z) &= 2z + 4z^2B^\bullet(z)\text;\\ \label{mdiff}
M^\star(z) &= 2z + zM^\star(z) + zM^\bullet(z) + 3z^2M^\bullet(z)\text.
\end{align}

Combinatorially, we denote by $\mathcal B^\bullet$ and $\mathcal M^\bullet$
the classes of binary trees and Motzkin trees, respectively, with a pointed
node. We also denote by $\mathcal B^{(\ell)}$ and $\mathcal M^{(\ell)}$ the
classes of trees pointed on a leaf and by $\mathcal M^{(u)}$ the class of
Motzkin trees pointed on a unary node. We define:
\begin{align*}
\mathcal B^\star &= \mathcal B^{(\ell)} + \mathcal B^{(\ell)}\text;\\
\mathcal M^\star &= \mathcal M^{(\ell)} + \mathcal M^{(\ell)} + \mathcal
M^{(u)}\text,
\end{align*}
In other words, the class $\mathcal B^\star$ is the class of binary trees
pointed in two possible ways on a leaf, and the class $\mathcal M^\star$ is
the class of Motzkin trees pointed either on a leaf in two possible ways, or
on a unary node. For convenience, we refer to these points with colors: a tree
of~$\mathcal B^\star$ can have either a blue pointed leaf or a red pointed
leaf, and a tree of~$\mathcal M^\star$ can have a blue pointed leaf, a red
pointed leaf, or a green pointed unary node. We refer to~$T$ as a
\emph{color pointed tree}.

Obviously, the respective generating functions of $\mathcal B^\bullet$ and
$\mathcal M^\bullet$ are $B^\bullet(z)$ and $M^\bullet(z)$. In
Section~\ref{sec:graft}, we give combinatorial interpretations of the equations
\eqref{bdiff} and \eqref{mdiff}. In Section~\ref{sec:repoint}, we complete the
specification by interpreting the equations $B^\star(z) = B(z) + B^\bullet(z)$
and $M^\star(z) = M(z) + M^\bullet(z)$.

\subsection{The graft operation} \label{sec:graft}

This section first give a bijection $F$ for Catalan trees:
\[F: 2 \mathcal{Z} + 4\mathcal{Z}^2 \cB^\bullet \rightarrow \cB^\star \]
and then the bijection $G$ for Motzkin trees:
\[G:2\cZ + \cZ \cM^\star + \cZ \cM^\bullet + 3\cZ ^2\cM^\bullet \rightarrow \cM^\star.\] 

These bijections are based on the idea of grafting, \ie , a local growth which can be locate anywhere in the tree (by opposition to most recursion which make trees grow from the root or the leaves)
The reverse bijections are given in proofs. 

\subsubsection{Bijection $F$ for Catalan trees}

The term $2 \mathcal{Z}$ correspond to the two trees reduced to a leaf, either blue pointed or red pointed. 
Now, there is $4$ ways to grow a tree from $\cB^\bullet$: \\
Let $T$ be be pointed tree of $\cB^\bullet$. Let $n$ be the pointed node, $p$ its parent (if $n$ is the root, $p=\bot$). There is now $4$ possible cases, named $F1$, $F2$, $F3$ and $F4$. In each of this case, we create a new binary node $b$ and a new leaf $\ell$ such that $b$ become the parent of $n$, $p$ become the parent of $b$ and either:
\begin{itemize}
\item in case $F1$, $\ell$ is the right child of $b$ and is red pointed
\item in case $F2$, $\ell$ is the right child of $b$ and is blue pointed
\item in case $F3$, $\ell$ is the left child of $b$ and is red pointed
\item in case $F4$, $\ell$ is the left child of $b$ and is blue pointed
\end{itemize}
and the original point on $n$ is forgotten. 

This bijection is represented on Figure~\ref{fig:bijremy}, where apointed node is represented by 
\begin{tikzpicture}[inner sep=1pt, circle, minimum size=0pt, xscale=0.75, scale =0.8]
\node [draw, minimum size=5pt] (n) at (0,-1){$\bullet$};\end{tikzpicture}, a
 blue pointed leaf by \begin{tikzpicture}[inner sep=1pt, circle, minimum size=0pt, xscale=0.75, scale =0.8]
\node [blue, draw, minimum size=4pt] (l) at (2,-1){\tiny $\nwarrow$};
\end{tikzpicture} and a red pointed leaf by \begin{tikzpicture}[inner sep=1pt, circle, minimum size=0pt, xscale=0.75, scale =0.8]
\node [red, draw, minimum size=4pt] (l) at (2,-1){\tiny $\nearrow$};
\end{tikzpicture}.

\begin{figure}
\begin{center}
\begin{tabular}{ccc}
\begin{tikzpicture}[baseline=(n),inner sep=1pt, circle, minimum size=0pt, xscale=0.75, scale=.6]
\node (p) at (0,0){};
\node at (p)[left=2mm] {$p$};
\node [draw, minimum size=5pt] (n) at (0,-1){$\bullet$};
\node at (n)[left=2mm] {$n$};
\node (S1) at (-1.3,-2.0) {};
\node (S2) at (1.3,-2.0) {};
\node (chut) at (0,-2.1) {};
\draw [dashed, gray] (p) to (n);
\draw [dashed, gray, minimum size=2pt] (n)--(S1)--(S2)--(n)--cycle;
\end{tikzpicture}&
$\xrightarrow{\displaystyle F}$&
$\Biggl\lbrace$ 
\begin{tikzpicture}[baseline=(b),inner sep=1pt, circle, minimum size=0pt, xscale=0.75, scale=.6]
\node (p) at (1,1){};
\node at (p)[left=2mm] {$p$};
\node [fill, minimum size=4pt] (b) at (1,0){};
\node at (b)[left=2mm] {$b$};
\node [red, draw, minimum size=4pt] (l) at (2,-1){\tiny $\nearrow$};
\node at (l)[right=2mm] {$\ell$};
\node [fill, minimum size=4pt] (n) at (0,-1){};
\node at (n)[left=2mm] {$n$};
\node (S1) at (-1.3,-2) {};
\node (S2) at (1.3,-2) {};
\node (chut) at (0,-2.1) {};
\draw (b) to (n);
\draw (b) to (l);
\draw [dashed, gray] (p) to (b);
\draw [dashed, gray, minimum size=2pt] (n)--(S1)--(S2)--(n)--cycle;
\end{tikzpicture}, 
\begin{tikzpicture}[baseline=(b),inner sep=1pt, circle, minimum size=0pt, xscale=0.75, scale=.6]
\node (p) at (1,1){};
\node at (p)[left=2mm] {$p$};
\node [fill, minimum size=4pt] (b) at (1,0){};
\node at (b)[left=2mm] {$b$};
\node [blue, draw, minimum size=4pt] (l) at (2,-1){\tiny $\nwarrow$};
\node at (l)[right=2mm] {$\ell$};
\node [fill, minimum size=4pt] (n) at (0,-1){};
\node at (n)[left=2mm] {$n$};
\node (S1) at (-1.3,-2) {};
\node (S2) at (1.3,-2) {};
\node (chut) at (0,-2.1) {};
\draw (b) to (n);
\draw (b) to (l);
\draw [dashed, gray] (p) to (b);
\draw [dashed, gray, minimum size=2pt] (n)--(S1)--(S2)--(n)--cycle;
\end{tikzpicture},
\begin{tikzpicture}[baseline=(b),inner sep=1pt, circle, minimum size=0pt, xscale=0.75, scale=.6]
\node (p) at (1,1){};
\node at (p)[left=2mm] {$p$};
\node [fill, minimum size=4pt] (b) at (1,0){};
\node at (b)[left=2mm] {$b$};
\node [red, draw, minimum size=4pt] (l) at (0,-1){\tiny $\nearrow$};
\node at (l)[left=2mm] {$\ell$};
\node [fill, minimum size=4pt] (n) at (2,-1){};
\node at (n)[right=2mm] {$n$};
\node (S1) at (.7,-2) {};
\node (S2) at (3.3,-2) {};
\node (chut) at (0,-2.1) {};
\draw (b) to (n);
\draw (b) to (l);
\draw [dashed, gray] (p) to (b);
\draw [dashed, gray, minimum size=2pt] (n)--(S1)--(S2)--(n)--cycle;
\end{tikzpicture},
\begin{tikzpicture}[baseline=(b),inner sep=1pt, circle, minimum size=0pt, xscale=0.75, scale=.6]
\node (p) at (1,1){};
\node at (p)[left=2mm] {$p$};
\node [fill, minimum size=4pt] (b) at (1,0){};
\node at (b)[left=2mm] {$b$};
\node [blue, draw, minimum size=4pt] (l) at (0,-1){\tiny $\nwarrow$};
\node at (l)[left=2mm] {$\ell$};
\node [fill, minimum size=4pt] (n) at (2,-1){};
\node at (n)[right=2mm] {$n$};
\node (S1) at (.7,-2) {};
\node (S2) at (3.3,-2) {};
\node (chut) at (0,-2.1) {};
\draw (b) to (n);
\draw (b) to (l);
\draw [dashed, gray] (p) to (b);
\draw [dashed, gray, minimum size=2pt] (n)--(S1)--(S2)--(n)--cycle;
\end{tikzpicture}
$\Biggr\rbrace$
\end{tabular}
\end{center}
\caption{The bijection from $\cB^{\bullet}_n$ to $\cB^{\star}_{n+2}$. Note
that the node~$b$ takes the place of~$n$: it becomes the root if $n$ was the
root, and the left (\textit{resp.}\ right) child of~$p$ if $n$ was the left
(\textit{resp.}\ right) child of~$p$.}
\label{fig:bijremy}
\end{figure}

\begin{proof}
First observe that there is a unique way in $F$ to obtain each tree red pointed on a left leaf, red pointed on a right leaf, blue pointed on a right leaf and blue pointed on a left leaf. 

Now, we describe the reverse bijection, $F^{-1}$ which, from a tree (of size $3$ or more) of $\cM^\star $ give a unique tree of 
$\cB^\bullet$. Let $\ell$ be the (red or blue) pointed leaf, let $b$ be its parent (as the size of the tree is at least $3$ a leaf cannot be the root), $s$ the other child of $b$ ($s$ can be a binary node or a leaf) and $p$ the parent of $b$. We supress $\ell$ and $b$ and $p$ become the parent of $s$: if $b$ where a left (resp.right) child, $s$ become a left (resp. right) child. We place a point on $s$.

By composing $F$ and $F^{-1}$, we have $F(F^{-1}(T)) = T$ and $F^{-1}(F(T')) = T'$. 
\end{proof}

\subsubsection{Bijection $G$ for Motzkin trees}

The identity
\[2\cZ + \cZ \cM^\star + \cZ \cM^\bullet + 3\cZ ^2\cM^\bullet =\cM^\star\] 
can be translated to the case it produces Motzkin trees of size $n$, with $n\geq 3$: 
\[G_n:\cZ \cM^\star_{n-1} + \cZ \cM^\bullet_{n-1} + 3\cZ ^2\cM^\bullet_{n-2} \rightarrow \cM^\star_{n}.\]

Now we decompose the operation $G_n$ according to its starting set. These operators are represented in Figure~~\ref{fig:bijmotz}, where a green pointed unary node is represented by 
\begin{tikzpicture}[inner sep=1pt, circle, minimum size=0pt, xscale=0.75, scale =0.8]\node [green, draw, minimum size=4pt](u) at (0,0){$\downarrow$};\end{tikzpicture}.

\begin{figure}
\begin{center}
\begin{tabular}{ccc|ccc|ccc}
\begin{tikzpicture}[baseline=(l),inner sep=1pt, circle, minimum size=0pt, xscale=0.75, scale=.6]
\node at (0,.5) {};
\node (p) at (0,0){};
\node at (p) [left=2mm] {$p$};
\node [blue, draw, minimum size=4pt] (l) at (0,-1){\tiny $\nwarrow$};
\node at (l) [left=2mm] {$\ell$};
\node (chut) at (0,-1.1) {};
\draw [dashed, gray] (p) to (l);
\end{tikzpicture}&
$\xrightarrow{\displaystyle G1}$&
\begin{tikzpicture}[baseline=(u),inner sep=1pt, circle, minimum size=0pt, xscale=0.75, scale=.6]
\node (p) at (0,0){};
\node at (p)[left=2mm] {$p$};
\node [fill, minimum size=4pt] (u) at (0,-1) {};
\node at (u)[left=2mm] {$u$};
\node [blue, draw, minimum size=4pt] (l) at (0,-2){\tiny $\nwarrow$};
\node at (l) [left=2mm] {$\ell$};
\node (chut) at (0,-2.1) {};
\draw [dashed, gray] (p) to (u);
\draw  (l) to (u);
\end{tikzpicture}&
\begin{tikzpicture}[baseline=(l),inner sep=1pt, circle, minimum size=0pt, xscale=0.75, scale=.6]
\node (p) at (0,0){};
\node at (p)[left=2mm] {$p$};
\node [red, draw, minimum size=4pt] (l) at (0,-1){\tiny $\nearrow$};
\node at (l) [left=2mm] {$\ell$};
\node (chut) at (0,-1.1) {};
\draw [dashed, gray] (p) to (l);
\end{tikzpicture}&
$\xrightarrow{\displaystyle G1}$&
\begin{tikzpicture}[baseline=(u),inner sep=1pt, circle, minimum size=0pt, xscale=0.75, scale=.6]
\node (p) at (0,0){};
\node at (p)[left=2mm] {$p$};
\node [fill, minimum size=4pt] (u) at (0,-1) {};
\node at (u)[left=2mm] {$u$};
\node [red, draw, minimum size=4pt] (l) at (0,-2){\tiny $\nearrow$};
\node at (l) [left=2mm] {$\ell$};
\node (chut) at (0,-2.1) {};
\draw [dashed, gray] (p) to (u);
\draw  (l) to (u);
\end{tikzpicture}&
\begin{tikzpicture}[baseline=(u),inner sep=1pt, circle, minimum size=0pt, xscale=0.75, scale=.6]
\node (p) at (0,1){};
\node at (p)[left=3mm] {$p$};
\node [green, draw, minimum size=4pt](u) at (0,0){$\downarrow$};
\node at (u)[left=3mm] {$u$};
\node [fill, minimum size=4pt] (n) at (0,-1){};
\node at (n)[left=3mm] {$n$};
\node (S1) at (-1.3,-2) {};
\node (S2) at (1.3,-2) {};
\node (chut) at (0,-2.1) {};
\draw [dashed, gray] (p) to (u);
\draw (n) to (u);
\draw [dashed, gray, minimum size=2pt] (n)--(S1)--(S2)--(n)--cycle;
\end{tikzpicture}&
$\xrightarrow{\displaystyle G1}$&
\begin{tikzpicture}[baseline=(b),inner sep=1pt, circle, minimum size=0pt, xscale=0.75, scale=.6]
\node (p) at (1,1){};
\node at (p)[left=2mm] {$p$};
\node [fill, minimum size=4pt] (b) at (1,0){};
\node at (b)[left=2mm] {$u$};
\node [red, draw, minimum size=4pt] (l) at (2,-1){\tiny $\nearrow$};
\node at (l) [right=2mm] {$\ell$};
\node [fill, minimum size=4pt] (n) at (0,-1){};
\node at (n)[left=2mm] {$n$};
\node (S1) at (-1.3,-2) {};
\node (S2) at (1.3,-2) {};
\node (chut) at (0,-2.1) {};
\draw (b) to (n);
\draw (b) to (l);
\draw [dashed, gray] (p) to (b);
\draw [dashed, gray, minimum size=2pt] (n)--(S1)--(S2)--(n)--cycle;
\end{tikzpicture}
\end{tabular}\\\hrule
\begin{tabular}{ccc}
\begin{tikzpicture}[baseline=(n),inner sep=1pt, circle, minimum size=0pt, xscale=0.75, scale=.6]
\node at (0,.5) {};
\node (p) at (0,0){};
\node at (p)[left=2mm] {$p$};
\node [draw, minimum size=5pt] (n) at (0,-1){$\bullet$};
\node at (n)[left=2mm] {$n$};
\node (S1) at (-1.3,-2) {};
\node (S2) at (1.3,-2) {};
\node (chut) at (0,-2.1) {};
\draw [dashed, gray] (p) to (n);
\draw [dashed, gray, minimum size=2pt] (n)--(S1)--(S2)--(n)--cycle;
\end{tikzpicture}&
$\xrightarrow{\displaystyle G2}$&
\begin{tikzpicture}[baseline=(u),inner sep=1pt, circle, minimum size=0pt, xscale=0.75, scale=.6]
\node (p) at (0,1){};
\node at (p)[left=3mm] {$p$};
\node [green, draw, minimum size=4pt](u) at (0,0){$\downarrow$};
\node at (u)[left=3mm] {$u$};
\node [fill, minimum size=4pt] (n) at (0,-1){};
\node at (n)[left=3mm] {$n$};
\node (S1) at (-1.3,-2) {};
\node (S2) at (1.3,-2) {};
\node (chut) at (0,-2.1) {};
\draw [dashed, gray] (p) to (u);
\draw (n) to (u);
\draw [dashed, gray, minimum size=2pt] (n)--(S1)--(S2)--(n)--cycle;
\end{tikzpicture}
\end{tabular}\\\hrule
\begin{tabular}{ccc}
\begin{tikzpicture}[baseline=(n),inner sep=1pt, circle, minimum size=0pt, xscale=0.75, scale=.6]
\node at (0,.5) {};
\node (p) at (0,0){};
\node at (p)[left=2mm] {$p$};
\node [draw, minimum size=5pt] (n) at (0,-1){$\bullet$};
\node at (n)[left=2mm] {$n$};
\node (S1) at (-1.3,-2) {};
\node (S2) at (1.3,-2) {};
\node (chut) at (0,-2.1) {};
\draw [dashed, gray] (p) to (n);
\draw [dashed, gray, minimum size=2pt] (n)--(S1)--(S2)--(n)--cycle;
\end{tikzpicture}&
$\xrightarrow{\displaystyle\{G3,G4,G5\}}$&
$\Biggl\lbrace$
\begin{tikzpicture}[baseline=(b),inner sep=1pt, circle, minimum size=0pt, xscale=0.75, scale=.6]
\node (p) at (1,1){};
\node at (p)[left=2mm] {$p$};
\node [fill, minimum size=4pt] (b) at (1,0){};
\node at (b)[left=2mm] {$b$};
\node [blue, draw, minimum size=4pt] (l) at (2,-1){\tiny $\nwarrow$};
\node at (l)[right=2mm] {$\ell$};
\node [fill, minimum size=4pt] (n) at (0,-1){};
\node at (n)[left=2mm] {$n$};
\node (S1) at (-1.3,-2) {};
\node (S2) at (1.3,-2) {};
\node (chut) at (0,-2.1) {};
\draw (b) to (n);
\draw (b) to (l);
\draw [dashed, gray] (p) to (b);
\draw [dashed, gray, minimum size=2pt] (n)--(S1)--(S2)--(n)--cycle;
\end{tikzpicture},
\begin{tikzpicture}[baseline=(b),inner sep=1pt, circle, minimum size=0pt, xscale=0.75, scale=.6]
\node (p) at (1,1){};
\node at (p)[left=2mm] {$p$};
\node [fill, minimum size=4pt] (b) at (1,0){};
\node at (b)[left=2mm] {$b$};
\node [red, draw, minimum size=4pt] (l) at (0,-1){\tiny $\nearrow$};
\node at (l)[left=2mm] {$\ell$};
\node [fill, minimum size=4pt] (n) at (2,-1){};
\node at (n)[right=2mm] {$n$};
\node (S1) at (.7,-2) {};
\node (S2) at (3.3,-2) {};
\node (chut) at (0,-2.1) {};
\draw (b) to (n);
\draw (b) to (l);
\draw [dashed, gray] (p) to (b);
\draw [dashed, gray, minimum size=2pt] (n)--(S1)--(S2)--(n)--cycle;
\end{tikzpicture},
\begin{tikzpicture}[baseline=(b),inner sep=1pt, circle, minimum size=0pt, xscale=0.75, scale=.6]
\node (p) at (1,1){};
\node at (p)[left=2mm] {$p$};
\node [fill, minimum size=4pt] (b) at (1,0){};
\node at (b)[left=2mm] {$b$};
\node [blue, draw, minimum size=4pt] (l) at (0,-1){\tiny $\nwarrow$};
\node at (l)[left=2mm] {$\ell$};
\node [fill, minimum size=4pt] (n) at (2,-1){};
\node at (n)[right=2mm] {$n$};
\node (S1) at (.7,-2) {};
\node (S2) at (3.3,-2) {};
\node (chut) at (0,-2.1) {};
\draw (b) to (n);
\draw (b) to (l);
\draw [dashed, gray] (p) to (b);
\draw [dashed, gray, minimum size=2pt] (n)--(S1)--(S2)--(n)--cycle;
\end{tikzpicture}
\hspace{-1em}
$\Biggr\rbrace$
\end{tabular}
\end{center}
\caption{The bijection $G$ for Motzkin trees. In the case $G1$ (above), the
result depends on the color of the pointed node. Note that applying $G1$ or
$G2$ increases the size by~one, while applying $G3$, $G4$ or $G5$ increases
the size by~two.
}
\label{fig:bijmotz}
\end{figure}

\begin{itemize}
\item $G1_n: \cM^\star_{n-1} \rightarrow \cM^\star_{n}$
The operation depend of the color of the point.
\begin{itemize}
\item If we start from a tree red pointed on a leaf $\ell$, we call $p$ the parent of $\ell$. The operation $G1_n$ consists in adding a new unary node as the new parent of $\ell$ and (left, right or only according to what $\ell$ was) child of $p$. We keep the red point on $\ell$. 
\item If we start from a tree blue pointed on a leaf $\ell$, we do a similar modification and keep the blue point on $\ell$. 
\item If we start from a tree green pointed on a unary node, we note $p$ its parent and $n$ its child. The operation $G1_n$ consists in transforming $u$ into a binary node of parent $p$, left child $n$ and right child a newly created leaf $\ell$. The new tree is considered red pointed on $\ell$. 
\end{itemize}
\item $G2_n: \cM^\bullet_{n-1} \rightarrow \cM^\star_{n}$
We start from a tree pointed on a node $n$, of parent $p$. The new tree is formed by adding a new node $u$ which become the new parent of $n$ and the (left, right or only according to what $n$ was) child of $p$. The tree is now considered green pointed on $u$. 
\item $G3_n: \cM^\bullet_{n-2} \rightarrow \cM^\star_{n}$
We start from a tree pointed on a node $n$, of parent $p$. We create a new binary node $b$ and a new leaf $\ell$ such that $b$ become the parent of $n$, $p$ become the parent of $b$ and $\ell$ is the right child of $b$. The new tree is considered is blue pointed on $\ell$.
\item $G4_n: \cM^\bullet_{n-2} \rightarrow \cM^\star_{n}$
We start from a tree pointed on a node $n$, of parent $p$. We create a new binary node $b$ and a new leaf $\ell$ such that $b$ become the parent of $n$, $p$ become the parent of $b$ and $\ell$ is the left child of $b$. The new tree is considered is red pointed on $\ell$.
\item $G5_n: \cM^\bullet_{n-2} \rightarrow \cM^\star_{n}$
We start from a tree pointed on a node $n$, of parent $p$. We create a new binary node $b$ and a new leaf $\ell$ such that $b$ become the parent of $n$, $p$ become the parent of $b$ and $\ell$ is the left child of $b$. The new tree is considered is blue pointed on $\ell$.
\end{itemize}

\begin{proof}
The reverse bijection $G^{-1}$ is described as different operators, depending of the type of point of $T \in \cM^\star_{n}$.
\begin{itemize}
\item If $T$ is red pointed on a only child $\ell$, its image is in $\cM^\star_{n-1}$. Let $u$ be the parent of $\ell$, $p$ the parent of $u$.
 $G^{-1}(T)$ is the tree where $u$ is erased, and $\ell$ become the (left, right or only according to what $u$ was) child of $p$. We keep the red point on $\ell$. 
\item If $T$ is blue pointed on a only child $\ell$, its image is in $\cM^\star_{n-1}$. Let $u$ be the parent of $\ell$, $p$ the parent of $u$.
 $G^{-1}(T)$ is the tree where $u$ is erased, and $\ell$ become the (left, right or only according to what $u$ was) child of $p$. We keep the blue point on $\ell$. 
\item If $T$ is red pointed on a right leaf $\ell$, its image is in $\cM^\star_{n-1}$. Let $u$ be the parent of $\ell$, $p$ the parent of $u$ and $n$ the left child of $u$. $G^{-1}(T)$ is the tree where $\ell$ is erased, and $u$ become a unary node of parent $p$ and only child $n$. The tree is considered as green pointed on $u$. 
\item If $T$ is green pointed on a unary node $u$, its image is in $\cM^\bullet_{n-1}$. Let $p$ be the parent of $u$, $n$ its child. 
 $G^{-1}(T)$ is the tree where $u$ is erased, and $n$ become the (left, right or only according to what $u$ was) child of $p$. The tree is considered pointed of $n$
 \item If $T$ is blue pointed on a right leaf $\ell$, its image is in $\cM^\bullet_{n-2}$. Let $b$ be the parent of $\ell$, $p$ the parent of $b$ and $n$ the left child of $b$. 
 $G^{-1}(T)$ is the tree where $\ell$ and $b$ are erased, and $n$ become the (left, right or only according to what $u$ was) child of $p$. The tree is now considered pointed on $n$. 
 \item If $T$ is red pointed on a left leaf $\ell$, its image is in $\cM^\bullet_{n-2}$. Let $b$ be the parent of $\ell$, $p$ the parent of $b$ and $n$ the right child of $b$. 
 $G^{-1}(T)$ is the tree where $\ell$ and $b$ are erased, and $n$ become the (left, right or only according to what $u$ was) child of $p$. The tree is now considered pointed on $n$. 
 \item If $T$ is blue pointed on a left leaf $\ell$, its image is in $\cM^\bullet_{n-2}$. Let $b$ be the parent of $\ell$, $p$ the parent of $b$ and $n$ the right child of $b$. 
 $G^{-1}(T)$ is the tree where $\ell$ and $b$ are erased, and $n$ become the (left, right or only according to what $u$ was) child of $p$. The tree is now considered pointed on $n$. 
\end{itemize}

\end{proof}

\subsection{The repointing operation} \label{sec:repoint}

We now give bijections realizing the two identities between combinatorial
classes:
\begin{align*}
\mathcal B^\star &= \mathcal B^\bullet + \mathcal B\text;\\
\mathcal M^\star &= \mathcal M^\bullet + \mathcal M\text.
\end{align*}
To do this, we start from a color pointed tree~$T$ of~$\mathcal B^\star$ or
$\mathcal M^\star$. Our bijections work by leaving the tree~$T$ intact and
changing the point; for this reason, we call it \emph{repointing} the
tree~$T$.

Let $v$ be a node of~$T$. The node~$v$ can be a \emph{left child}, a
\emph{right child}, or an \emph{only child} (in case its parent is unary). By
convention, the root node of~$T$ is considered a right child. We also order
the ancestors of the pointed node~$v$ from~$v$ to the root, including~$v$
itself.

The repointing operation is defined as follows.
\begin{itemize}
\item If $T$ has a blue pointed leaf, say~$\ell$, let $v$ be the first
ancestor of~$\ell$ that is a left child, or $v = \bot$ if no such ancestor
exists.
\item If $T$ has a red pointed leaf~$\ell$, let $v$ be the first ancestor
of~$\ell$ that is a right child (since the root node is a right child, $v$
always exists).
\item If $T$ has a green pointed unary node~$u$, let $v$ be the unique child
of~$u$.
\end{itemize}
In all cases, the result is the pointed tree $(T, v)$, or the unpointed
tree~$T$ if $v = \bot$. It is thus an element of~$\mathcal B^\bullet +
\mathcal B$ or $\mathcal M^\bullet + \mathcal M$.

\begin{figure}
\begin{center}
\begin{tikzpicture}[baseline=(2),inner sep=1pt, circle, minimum size=0pt, xscale=.9, scale =0.7]\node (p) at (0,0){};
\node(0) at  (1.4,4.6) {};
\node (1) at (2,4){$\bullet$};
\node (2) at (1,3){$\bullet$};
\node (3) at (.4,2){$\bullet$};
\node (4) at (1.6,2){$\bullet$};
\node (5) at (1.1,1){$\bullet$};
\node (6) at (.7,0){$\bullet$};
\node (7) at (1.4,0){$\bullet$};
\node (8) at (2.1,1){$\bullet$};
\node (9) at (1.7,0){$\bullet$};
\node[blue, draw, minimum size=4pt] (10) at (2.4,0){\tiny $\nwarrow$};
\node (11) at (3,3){$\bullet$};
\node (12) at (2.4,2){$\bullet$};
\node (13) at (3.6,2){$\bullet$};
\draw [dotted ] (0) to (1);
\draw (2) to (1);
\draw (11) to (1);
\draw (2) to (3);
\draw (2) to (4);
\draw (4) to (5);
\draw (4) to (8);
\draw (5) to (6);
\draw (5) to (7);
\draw (8) to (9);
\draw (8) to (10);
\draw (11) to (12);
\draw (11) to (13);
\end{tikzpicture}
\hspace{-1mm}$\rightsquigarrow$\hspace{-1mm}%
\begin{tikzpicture}[baseline=(2),inner sep=1pt, circle, minimum size=0pt, xscale=.9, scale =0.7]\node (p) at (0,0){};
\node(0) at  (1.4,4.6) {};
\node (1) at (2,4){$\bullet$};
\node[draw, minimum size=4pt] (2) at (1,3){$\bullet$};
\node (3) at (.4,2){$\bullet$};
\node (4) at (1.6,2){$\bullet$};
\node (5) at (1.1,1){$\bullet$};
\node (6) at (.7,0){$\bullet$};
\node (7) at (1.4,0){$\bullet$};
\node (8) at (2.1,1){$\bullet$};
\node (9) at (1.7,0){$\bullet$};
\node (10) at (2.4,0){$\bullet$};
\node (11) at (3,3){$\bullet$};
\node (12) at (2.4,2){$\bullet$};
\node (13) at (3.6,2){$\bullet$};
\draw [dotted ] (0) to (1);
\draw (2) to (1);
\draw (11) to (1);
\draw (2) to (3);
\draw (2) to (4);
\draw (4) to (5);
\draw (4) to (8);
\draw (5) to (6);
\draw (5) to (7);
\draw (8) to (9);
\draw (8) to (10);
\draw (11) to (12);
\draw (11) to (13);
\end{tikzpicture}
\hspace{.1cm}
\begin{tikzpicture}[baseline=(2),inner sep=1pt, circle, minimum size=0pt, xscale=.9, scale =0.7]\node (p) at (0,0){};
\node(0) at  (1.4,4.6) {};
\node (1) at (2,4){$\bullet$};
\node (2) at (1,3){$\bullet$};
\node (3) at (.4,2){$\bullet$};
\node (4) at (1.6,2){$\bullet$};
\node (5) at (1.1,1){$\bullet$};
\node (6) at (.7,0){$\bullet$};
\node (7) at (1.4,0){$\bullet$};
\node (8) at (2.1,1){$\bullet$};
\node (9) at (1.7,0){$\bullet$};
\node[red, draw, minimum size=4pt] (10) at (2.4,0){\tiny $\nearrow$};
\node (11) at (3,3){$\bullet$};
\node (12) at (2.4,2){$\bullet$};
\node (13) at (3.6,2){$\bullet$};
\draw [dotted ] (0) to (1);
\draw (2) to (1);
\draw (11) to (1);
\draw (2) to (3);
\draw (2) to (4);
\draw (4) to (5);
\draw (4) to (8);
\draw (5) to (6);
\draw (5) to (7);
\draw (8) to (9);
\draw (8) to (10);
\draw (11) to (12);
\draw (11) to (13);
\end{tikzpicture}
\hspace{-1mm}$\rightsquigarrow$\hspace{-1mm}%
\begin{tikzpicture}[baseline=(2),inner sep=1pt, circle, minimum size=0pt, xscale=.9, scale =0.7]\node (p) at (0,0){};
\node(0) at  (1.4,4.6) {};
\node (1) at (2,4){$\bullet$};
\node (2) at (1,3){$\bullet$};
\node (3) at (.4,2){$\bullet$};
\node (4) at (1.6,2){$\bullet$};
\node (5) at (1.1,1){$\bullet$};
\node (6) at (.7,0){$\bullet$};
\node (7) at (1.4,0){$\bullet$};
\node (8) at (2.1,1){$\bullet$};
\node (9) at (1.7,0){$\bullet$};
\node[draw, minimum size=4pt] (10) at (2.4,0){$\bullet$};
\node (11) at (3,3){$\bullet$};
\node (12) at (2.4,2){$\bullet$};
\node (13) at (3.6,2){$\bullet$};
\draw [dotted ] (0) to (1);
\draw (2) to (1);
\draw (11) to (1);
\draw (2) to (3);
\draw (2) to (4);
\draw (4) to (5);
\draw (4) to (8);
\draw (5) to (6);
\draw (5) to (7);
\draw (8) to (9);
\draw (8) to (10);
\draw (11) to (12);
\draw (11) to (13);
\end{tikzpicture}\\[2mm]
\begin{tikzpicture}[baseline=(2),inner sep=1pt, circle, minimum size=0pt, xscale=.9, scale =0.7]\node (p) at (0,0){};
\node(0) at  (1.4,4.6) {};
\node (1) at (2,4){$\bullet$};
\node (2) at (1,3){$\bullet$};
\node (3) at (.4,2){$\bullet$};
\node (4) at (1.6,2){$\bullet$};
\node (5) at (1.1,1){$\bullet$};
\node[red, draw, minimum size=4pt] (6) at (.7,0){\tiny $\nearrow$};
\node (7) at (1.4,0){$\bullet$};
\node (8) at (2.1,1){$\bullet$};
\node (9) at (1.7,0){$\bullet$};
\node (10) at (2.4,0){$\bullet$};
\node (11) at (3,3){$\bullet$};
\node (12) at (2.4,2){$\bullet$};
\node (13) at (3.6,2){$\bullet$};
\draw [dotted ] (0) to (1);
\draw (2) to (1);
\draw (11) to (1);
\draw (2) to (3);
\draw (2) to (4);
\draw (4) to (5);
\draw (4) to (8);
\draw (5) to (6);
\draw (5) to (7);
\draw (8) to (9);
\draw (8) to (10);
\draw (11) to (12);
\draw (11) to (13);
\end{tikzpicture}
\hspace{-1mm}$\rightsquigarrow$\hspace{-1mm}%
\begin{tikzpicture}[baseline=(2),inner sep=1pt, circle, minimum size=0pt, xscale=.9, scale =0.7]\node (p) at (0,0){};
\node(0) at  (1.4,4.6) {};
\node (1) at (2,4){$\bullet$};
\node (2) at (1,3){$\bullet$};
\node (3) at (.4,2){$\bullet$};
\node[draw, minimum size=4pt] (4) at (1.6,2){$\bullet$};
\node (5) at (1.1,1){$\bullet$};
\node (6) at (.7,0){$\bullet$};
\node (7) at (1.4,0){$\bullet$};
\node (8) at (2.1,1){$\bullet$};
\node (9) at (1.7,0){$\bullet$};
\node (10) at (2.4,0){$\bullet$};
\node (11) at (3,3){$\bullet$};
\node (12) at (2.4,2){$\bullet$};
\node (13) at (3.6,2){$\bullet$};
\draw [dotted ] (0) to (1);
\draw (2) to (1);
\draw (11) to (1);
\draw (2) to (3);
\draw (2) to (4);
\draw (4) to (5);
\draw (4) to (8);
\draw (5) to (6);
\draw (5) to (7);
\draw (8) to (9);
\draw (8) to (10);
\draw (11) to (12);
\draw (11) to (13);
\end{tikzpicture}

\end{center}
\caption{Trees of $\cC^{\star}$ and their associated pointed tree of $\cC+\cC^{\bullet}$}
\label{fig:repointcatalan}
\end{figure}
\begin{figure}
\begin{center}
\begin{tikzpicture}[baseline=(2),inner sep=1pt, circle, minimum size=0pt, xscale=.9, scale =0.7]\node (p) at (0,0){};
\node(0) at  (1.4,4.6) {};
\node (1) at (2,4){$\bullet$};
\node (2) at (1,3){$\bullet$};
\node[green, draw, minimum size=4pt] (3) at (1,2){$\downarrow$};
\node (4) at (1,1){$\bullet$};
\node (5) at (1,0){$\bullet$};
\node (6) at (0,-1){$\bullet$};
\node (7) at (2,-1){$\bullet$};
\node (8) at (3,3){$\bullet$};
\draw [dotted ] (0) to (1);
\draw (2) to (1);
\draw (8) to (1);
\draw (2) to (3);
\draw (5) to (4);
\draw (4) to (3);
\draw (6) to (5);
\draw (5) to (7);
\end{tikzpicture}
\hspace{-1mm}$\rightsquigarrow$\hspace{-5mm}%
\begin{tikzpicture}[baseline=(2),inner sep=1pt, circle, minimum size=0pt, xscale=.9, scale =0.7]\node (p) at (0,0){};
\node(0) at  (1.4,4.6) {};
\node (1) at (2,4){$\bullet$};
\node (2) at (1,3){$\bullet$};
\node (3) at (1,2){$\bullet$};
\node[draw, minimum size=4pt] (4) at (1,1){$\bullet$};
\node (5) at (1,0){$\bullet$};
\node (6) at (0,-1){$\bullet$};
\node (7) at (2,-1){$\bullet$};
\node (8) at (3,3){$\bullet$};
\draw [dotted ] (0) to (1);
\draw (2) to (1);
\draw (8) to (1);
\draw (4) to (3);
\draw (2) to (3);
\draw (5) to (4);
\draw (6) to (5);
\draw (5) to (7);
\end{tikzpicture}
\hspace{.5cm}
\begin{tikzpicture}[baseline=(2),inner sep=1pt, circle, minimum size=0pt, xscale=.9, scale =0.7]\node (p) at (0,0){};
\node(0) at  (1.4,4.6) {};
\node (1) at (2,4){$\bullet$};
\node (2) at (1,3){$\bullet$};
\node (3) at (1,2){$\bullet$};
\node (4) at (1,1){$\bullet$};
\node (5) at (1,0){$\bullet$};
\node[red, draw, minimum size=4pt] (6) at (0,-1){\tiny $\nearrow$};
\node (7) at (2,-1){$\bullet$};
\node (8) at (3,3){$\bullet$};
\draw [dotted ] (0) to (1);
\draw (2) to (1);
\draw (8) to (1);
\draw (2) to (3);
\draw (5) to (4);
\draw (6) to (5);
\draw (5) to (7);
\draw (4) to (3);
\end{tikzpicture}
\hspace{-1mm}$\rightsquigarrow$\hspace{-5mm}%
\begin{tikzpicture}[baseline=(2),inner sep=1pt, circle, minimum size=0pt, xscale=.9, scale =0.7]\node (p) at (0,0){};
\node(0) at  (1.4,4.6) {};
\node[draw, minimum size=4pt] (1) at (2,4){$\bullet$};
\node (2) at (1,3){$\bullet$};
\node (3) at (1,2){$\bullet$};
\node (4) at (1,1){$\bullet$};
\node (5) at (1,0){$\bullet$};
\node (6) at (0,-1){$\bullet$};
\node (7) at (2,-1){$\bullet$};
\node (8) at (3,3){$\bullet$};
\draw [dotted ] (0) to (1);
\draw (2) to (1);
\draw (8) to (1);
\draw (2) to (3);
\draw (5) to (4);
\draw (6) to (5);
\draw (5) to (7);
\draw (4) to (3);
\end{tikzpicture}\\[3mm]
\begin{tikzpicture}[baseline=(2),inner sep=1pt, circle, minimum size=0pt, xscale=.9, scale =0.7]\node (p) at (0,0){};
\node(0) at  (1.4,4.6) {};
\node (1) at (2,4){$\bullet$};
\node (2) at (1,3){$\bullet$};
\node (3) at (1,2){$\bullet$};
\node (4) at (1,1){$\bullet$};
\node (5) at (1,0){$\bullet$};
\node (6) at (0,-1){$\bullet$};
\node[blue, draw, minimum size=4pt] (7) at (2,-1){\tiny $\nwarrow$};
\node (8) at (3,3){$\bullet$};
\draw [dotted ] (0) to (1);
\draw (2) to (1);
\draw (8) to (1);
\draw (2) to (3);
\draw (5) to (4);
\draw (6) to (5);
\draw (5) to (7);
\draw (4) to (3);
\end{tikzpicture}
\hspace{-1mm}$\rightsquigarrow$\hspace{-5mm}%
\begin{tikzpicture}[baseline=(2),inner sep=1pt, circle, minimum size=0pt, xscale=.9, scale =0.7]\node (p) at (0,0){};
\node(0) at  (1.4,4.6) {};
\node (1) at (2,4){$\bullet$};
\node[draw, minimum size=4pt] (2) at (1,3){$\bullet$};
\node (3) at (1,2){$\bullet$};
\node (4) at (1,1){$\bullet$};
\node (5) at (1,0){$\bullet$};
\node (6) at (0,-1){$\bullet$};
\node (7) at (2,-1){$\bullet$};
\node (8) at (3,3){$\bullet$};
\draw [dotted ] (0) to (1);
\draw (2) to (1);
\draw (8) to (1);
\draw (2) to (3);
\draw (5) to (4);
\draw (4) to (3);
\draw (6) to (5);
\draw (5) to (7);
\end{tikzpicture}

\end{center}
\caption{Trees of $\cM^{\star}$ and their associated pointed tree of $\cM+\cM^{\bullet}$}
\label{fig:repointcmotzkin}
\end{figure}

\begin{proposition} \label{prop:repoint}
The repointing operation described above is a bijection.
\end{proposition}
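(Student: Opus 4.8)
The plan is to fix the underlying tree~$T$ and exploit the fact that repointing never modifies~$T$: it therefore suffices to show that, for each fixed binary or Motzkin tree~$T$, the operation restricts to a bijection between the set of admissible colour-points on~$T$ and the set $V(T)\cup\{\bot\}$ consisting of the nodes of~$T$ together with the empty point. Every node~$v$ of~$T$ is either a \emph{left child}, a \emph{right child} (the root included, by the stated convention), or an \emph{only child}, and these cases are mutually exclusive and exhaustive; write $N_L$, $N_R$, $N_O$ for the corresponding sets of nodes. Reading off the definition, a blue pointed leaf is sent into $N_L\cup\{\bot\}$, a red pointed leaf into $N_R$ (a target is always available, since the root qualifies), and a green pointed unary node into $N_O$ (the child of a unary node being an only child). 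Since $N_L\cup\{\bot\}$, $N_R$ and $N_O$ are pairwise disjoint and cover $V(T)\cup\{\bot\}$, it is enough to prove that each of the three induced maps is a bijection.

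I would do this by writing down the inverse. For a node~$v$, let $\lambda(v)$ (resp.\ $\rho(v)$) denote the leaf obtained from~$v$ by repeatedly descending to the left child when the current node is binary and to the only child when it is unary (resp.\ descending to the right child, or to the only child if unary); both are well defined as $T$ is finite, and by construction each node strictly between~$v$ and~$\lambda(v)$, as well as $\lambda(v)$ itself whenever $\lambda(v)\neq v$, is a left child or an only child, hence not a right child (symmetrically for $\rho$). The candidate inverse sends a right child~$v$ to~$T$ with a red point on $\lambda(v)$, a left child~$v$ to~$T$ with a blue point on $\rho(v)$, an only child~$v$ to~$T$ with a green point on the parent of~$v$, and $\bot$ to~$T$ with a blue point on $\rho(r)$, where $r$ is the root.

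It remains to verify that the two maps compose to the identity, which boils down to two elementary observations. First, if $v$ is a right child then $v$ is exactly the first ancestor of $\lambda(v)$ (scanning from $\lambda(v)$ upward, $\lambda(v)$ included) that is a right child, because every earlier node on the path is a left or only child; conversely, if a red pointed leaf~$\ell$ has first right-child ancestor~$v$, then below~$v$ the path down to~$\ell$ uses only left children and only children, and since a node has at most one left child and at most one only child that path is forced, so $\ell=\lambda(v)$. The same argument with ``left'' and ``right'' interchanged handles blue pointed leaves and left children, and the case $v=\bot$ corresponds to a blue pointed leaf with no left-child ancestor, which forces the path descending from the root, hence that leaf, to be $\rho(r)$. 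The green/only-child case is immediate since a unary node has a unique child and an only child has a unique parent. The only point that needs some care is the handling of~$\bot$ together with the convention that the root is a right child, which is precisely what makes the three target blocks partition $V(T)\cup\{\bot\}$; beyond this organised case analysis no real difficulty arises. As a sanity check, for a binary tree with~$n$ internal nodes both sides have $2n+2$ elements, with the analogous equality for Motzkin trees, so injectivity alone would already force bijectivity.
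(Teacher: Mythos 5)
Your proof is correct and follows essentially the same route as the paper's: fixing the tree $T$ and inverting the repointing by a case analysis on whether the target is a left child, right child, only child, or $\bot$, identifying the unique preimage as the blue-pointed rightmost descendant leaf, the red-pointed leftmost descendant leaf, the green-pointed parent, or the blue-pointed rightmost leaf of $T$, respectively. You merely spell out more explicitly (via $\lambda$, $\rho$, the composition checks, and the counting remark) what the paper's proof asserts in its four cases.
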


\begin{proof}
Let $v$ be either a node of~$T$ or~$\bot$; we prove that there exists a unique
pointing of~$T$ the repointing of which yields the tree $(T,v)$. We
distinguish four cases.
\begin{itemize}
\item If $v = \bot$, the original point must be a blue leaf~$\ell$ with no
left child in its ancestors. This means that $\ell$ is the rightmost leaf
of~$T$.
\item If $v$ is a left child, the original point must be a blue leaf~$\ell$
descendant of~$v$ such that the branch from~$v$ to~$\ell$ contains no left
child. This means that $\ell$ is the rightmost descendant leaf of~$v$.
\item Similarly, if $v$ is a right child, the original point must be the red
leaf~$\ell$, where $\ell$ is the leftmost descendant leaf of~$v$.
\item If $v$ is an only child, the original point must be the green unary
node~$u$, where $u$ is the parent node of~$v$.
\end{itemize}
This completes the proof.
\end{proof}

The advantage of this bijection is that it is easily realized as an algorithm,
as shown in Section~\ref{sec:random}.

\section{Random sampling} \label{sec:random}

All algorithms in this section assume that the following operations on a
node~$v$ of a tree~$T$ can be done in time~$\mathcal O(1)$:
\begin{itemize}
\item determine if $v$ is a left child, right child, or only child, and find
its father;
\item determine if $v$ is a binary node, unary node, or leaf, and find its
children.
\end{itemize}
In addition, it is assumed that drawing a random integer in~$\{1,\dotsc,m\}$
can be done in time $\mathcal O(1)$ for $m = 4$ or $9$.

Practical implementation is discussed in Section~\ref{sec:structure}.


\subsection{Binary trees}

We first present a sampler which is linear in time, space and random bits. The proof will be reuse in the case of Motzkin trees. Then, we slightly modify the algorithm in order to obtain a random bits complexity in $2n+ \Theta (\log^2 n)$, the entropy being $2n-\Theta (\log n )$.


\begin{algorithm}[ht]
\caption{A try for random binary tree}
\label{algo:b1}
\KwIn{A size $n$.}
\KwOut{A binary tree of size $2n+1$ or FAIL}
$\text{T}:=$ a blue or red pointed leaf (with probability $1/2$ each)\label{line:start}\\
\textbf{Repeat} $n$ \textbf{times:} \\
\hspace{1cm} $(\text{T}, v) =$ repoint T\\
\hspace{1cm} if $v=\bot$ \textbf{return} FAIL\\
\hspace{1cm}\textbf{Choose one of the following cases:} \\
\hspace{1cm}\textbf{With Probability} $\frac{1}{4}$,\\
\hspace{2cm} $\text{T} := F1(T,v) $\\
\hspace{1cm}\textbf{With Probability} $\frac{1}{4}$,\\
\hspace{2cm} $\text{T} := F2(T,v) $\\
\hspace{1cm}\textbf{With Probability} $\frac{1}{4}$,\\
\hspace{2cm} $\text{T} := F3(T,v) $\\
\hspace{1cm}\textbf{With Probability} $\frac{1}{4}$,\\
\hspace{2cm} $\text{T} := F4(T,v) $\\
\textbf{return} T
\end{algorithm}

\begin{algorithm}[ht]
\caption{Random binary tree}
\label{algo:b}
\KwIn{A size $n$.}
\KwOut{A binary tree of size $2n+1$}
$\text{T}= $ A try for random binary tree $(n)$\\
\textbf{While} $\text{T} = $ FAIL \textbf{do:} \\
\hspace{1cm} $\text{T}= $ A try for random binary tree $(n)$\\
\textbf{return} T
\end{algorithm}

\begin{remark}
In line~\ref{line:start}  of Algorithm~\ref{algo:b1} we can substitute ``blue or red'' by ``red''. Indeed, the algorithm fails until the starting leaf is red.  
\end{remark}

\begin{theorem} \label{thm:algo}
Algorithm~\ref{algo:b} outputs a uniformly distributed Catalan tree with~$2n+1$
nodes. It runs in average linear time.
\end{theorem}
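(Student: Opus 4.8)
The plan is to prove two things separately: (i) conditioned on not returning FAIL, Algorithm~\ref{algo:b1} outputs a uniformly random Catalan tree with $2n+1$ nodes; and (ii) the expected total running time of Algorithm~\ref{algo:b} is $\Theta(n)$. The theorem then follows because Algorithm~\ref{algo:b} is plain rejection sampling whose success probability (computed below) is positive, so it terminates almost surely and returns a sample from the conditional law.

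For (i), I would track the color pointed tree $T_k\in\cB^\star_{2k+1}$ held after $k$ passes through the loop and prove by induction on $k$ that, for every $T\in\cB^\star_{2k+1}$,
\[
\Pr[\text{no FAIL in passes }1,\dots,k\ \text{and}\ T_k=T]=\tfrac12\,4^{-k}.
\]
The base case $k=0$ is the initial choice of a blue or red leaf with probability $\tfrac12$ each. For the inductive step, fix $T\in\cB^\star_{2k+3}$; as $2k+3\ge 3$, $T$ is not a single colored leaf, so by the bijection $F$ of Section~\ref{sec:graft} it has a unique preimage, i.e.\ $T=Fj(S,w)$ for a unique $j\in\{1,2,3,4\}$ and a unique $(S,w)\in\cB^\bullet_{2k+1}$. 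By Proposition~\ref{prop:repoint} the repointing map is a bijection $\cB^\star_{2k+1}\to\cB^\bullet_{2k+1}+\cB_{2k+1}$, so $(S,w)$ has a unique repointing-preimage $S^\dagger\in\cB^\star_{2k+1}$, and that $S^\dagger$ does not FAIL. Hence the event "$T_{k+1}=T$ with no FAIL'' occurs exactly when "$T_k=S^\dagger$ with no FAIL'' occurs and case $j$ is picked in pass $k+1$, which is independent and has probability $\tfrac14$; the induction closes. Taking $k=n$, $T_n$ is uniform on $\cB^\star_{2n+1}$ conditionally on success. Finally I would observe that a Catalan tree with $2n+1$ nodes has exactly $n+1$ leaves, hence exactly $2(n+1)$ preimages in $\cB^\star_{2n+1}$ under "forget the decoration''; this count being the same for every tree, the underlying tree of $T_n$ is uniform over Catalan trees of size $2n+1$. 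The success probability is $p=|\cB^\star_{2n+1}|\cdot\tfrac12\,4^{-n}=\binom{2n}{n}4^{-n}>0$.

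For (ii), the key is to amortize the only non-constant-time step, the walk up the tree inside "repoint''. Write $\delta_j\ge 0$ for the number of edges traversed during the repointing of pass $j$; since $F1,\dots,F4$ insert the new pointed leaf $\ell_{j+1}$ exactly one level below the node $v_j$ at which the graft occurs, one has $\mathrm{depth}(v_j)=\mathrm{depth}(\ell_{j+1})-1$ and hence $\delta_j=\mathrm{depth}(\ell_j)-\mathrm{depth}(\ell_{j+1})+1$. Summing telescopes: the total walk length over the first $i$ completed passes equals $\mathrm{depth}(\ell_1)-\mathrm{depth}(\ell_{i+1})+i\le i$, since $\ell_1$ is the root leaf. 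So a successful run costs $O(n)$, and a run that FAILs during pass $i$ costs $O(i)$ (the final repointing walks at most to the root, an extra $O(i)$). To balance this against the restart count, note that, conditioned on reaching pass $i$, the tree $T_{i-1}$ is uniform on $\cB^\star_{2i-1}$ and FAIL occurs iff its pointed leaf is blue and is the (unique) rightmost leaf, i.e.\ with probability $\frac1{2i}$; hence $\Pr[\text{reach pass }i]=\prod_{j=1}^{i-1}\bigl(1-\frac1{2j}\bigr)=4^{1-i}\binom{2i-2}{i-1}=\Theta(i^{-1/2})$, so $\Pr[\text{FAIL in pass }i]=\Theta(i^{-3/2})$ and $p=\Theta(n^{-1/2})$. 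Thus the expected cost of a single run \emph{given that it fails} is $\frac1{1-p}\sum_{i=1}^{n}O(i)\,\Theta(i^{-3/2})=O(\sqrt n)$, while the number of failed runs is geometric with mean $\frac{1-p}{p}=\Theta(\sqrt n)$; as the runs are i.i.d., Wald's identity bounds the total cost of all failed runs by $O(\sqrt n)\cdot O(\sqrt n)=O(n)$, plus $O(n)$ for the one successful run. With the trivial $\Omega(n)$ lower bound (the output has $2n+1$ nodes), the average running time is $\Theta(n)$.

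The step I expect to be the main obstacle is exactly this last balancing: the naive "$O(n)$ per run times $\Theta(\sqrt n)$ runs'' only gives $O(n^{3/2})$, so one genuinely needs both the telescoping bound (so that a run failing at pass $i$ costs merely $O(i)$, not $O(n)$) and the tail estimate showing failures occur early enough (probability $\Theta(i^{-3/2})$ at pass $i$) that a failed run costs only $O(\sqrt n)$ on average. The uniformity part, by contrast, is a routine induction once the bijection $F$ and Proposition~\ref{prop:repoint} are available; the one point to check with care is that every Catalan tree of the target size has the same number of decorated lifts into $\cB^\star$.
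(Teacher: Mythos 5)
Your proposal is correct and follows essentially the same route as the paper: your induction giving each color pointed tree probability $\tfrac12 4^{-k}$ is exactly Lemma~\ref{lem:proba}, your telescoping depth bound is the amortization of Lemma~\ref{lem:repoint}, and your balancing of $\Theta(\sqrt n)$ failed runs (failure probability $\Theta(i^{-3/2})$ at pass~$i$) against an $O(\sqrt n)$ expected cost per failed run mirrors the paper's proof of Theorem~\ref{thm:algo}. The only difference is cosmetic: you account for the single successful run separately as $O(n)$ (slightly cleaner than the paper's ``$O(\sqrt n)$ runs each costing $O(\sqrt n)$'') and make the counting of the $2n+2$ colored pointings and the Wald step explicit.
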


To prove the theorem, we first establish the following lemmas.

\begin{lemma} \label{lem:proba}
Let $T$ be a color pointed binary tree with $2n+1$ nodes. A run of
Algorithm~\ref{algo:b1} reaches $T$ with probability:
\[\mathbb P(T) = \frac1{2\cdot4^n}.\]
\end{lemma}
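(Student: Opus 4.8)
The plan is to track a single run of Algorithm~\ref{algo:b1} backwards from the output tree~$T$ and show that the preimage is a chain of forced choices, each carrying a probability factor of exactly~$1/4$, with a final factor of~$1/2$ for the color of the starting leaf. Concretely, I would argue by induction on~$n$. The base case $n=0$ is immediate: the algorithm performs no iterations, so it reaches a given color pointed leaf with probability $1/2$, matching $\frac{1}{2\cdot 4^0}$.

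For the inductive step, fix a color pointed binary tree~$T$ with $2n+1$ nodes, $n\geq 1$. By Proposition~\ref{prop:repoint}, the repointing operation is a bijection, so there is a unique pointed tree $(T,v)\in\cB^\bullet$ (it cannot be~$\bot$ since a tree with $2n+1\geq 3$ nodes is not the all-right branch case that produces~$\bot$ — or more precisely, if the forced predecessor had $v=\bot$ the run would have returned FAIL and never reached~$T$, so that branch contributes~$0$ and the relevant preimage indeed has $v\neq\bot$). Now I invoke the fact established in the proof of bijection~$F$ (Section~\ref{sec:graft}): each tree of~$\cB^\star$ with at least three nodes has a \emph{unique} preimage under~$F$, together with the information of which of the four operations $F1,\dots,F4$ produced it. Hence, conditioned on the state at the start of the last iteration being exactly the color pointed tree $T'$ obtained from $(T,v)$ by $F^{-1}$ (i.e.\ $T' = F^{-1}(T)$ with its induced point, then re-interpreted as the element of $\cB^\bullet$ whose repointing is $(T,v)$ — these two are linked by repointing being a bijection), the algorithm outputs~$T$ in that iteration precisely when it chooses the unique correct operation among $F1,\dots,F4$, which happens with probability $1/4$. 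Any other state at the start of the last iteration cannot lead to~$T$. Therefore
\[
\mathbb P(T) = \frac14\,\mathbb P(T'),
\]
where $T'$ is a color pointed binary tree with $2(n-1)+1$ nodes, and the induction hypothesis gives $\mathbb P(T') = \frac{1}{2\cdot 4^{n-1}}$, yielding $\mathbb P(T) = \frac{1}{2\cdot 4^n}$.

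The main obstacle I anticipate is bookkeeping the interplay between the two bijections: within one iteration the algorithm first applies \emph{repoint} (turning the current $\cB^\star$-tree into a $\cB^\bullet$-tree, possibly FAIL) and then applies one of the $F_i$ (turning a $\cB^\bullet$-tree into a $\cB^\star$-tree with two more nodes). To run the argument backwards cleanly I need to check that the composition ``choose $F_i$, then repoint'' has, for each target tree~$T$, exactly one preimage consisting of a $\cB^\star$-tree $T'$ of size $2n-1$ together with the choice of~$i$ — and that the $1/2$ factor only enters at the very first leaf. This follows by composing the uniqueness statement from the proof of~$F$ with the bijectivity of repointing (Proposition~\ref{prop:repoint}), but it must be stated carefully so that the probability is counted once and only once; in particular I should note that distinct starting states or distinct choices of $F_i$ lead to distinct trees, so there is no overcounting, and that the FAIL branch simply removes some probability mass rather than redirecting it to~$T$. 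Everything else is a routine induction.
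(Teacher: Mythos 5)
Your proof is correct and follows essentially the same route as the paper, whose proof of Lemma~\ref{lem:proba} is simply stated as ``by induction on~$n$'': you supply exactly the details that induction requires, namely the base case with the $1/2$ color factor and, for each iteration, the unique preimage obtained by composing the bijectivity of repointing (Proposition~\ref{prop:repoint}) with the uniqueness of preimages under~$F$, giving the factor~$1/4$. Your remark that the FAIL branch only removes probability mass, and that distinct predecessor states or distinct choices of $F_i$ cannot both lead to~$T$, is precisely the bookkeeping needed to make the paper's one-line argument rigorous.
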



\begin{proof}
By induction on $n$.
\end{proof}

\begin{lemma} \label{lem:repoint}
Assume that Algorithm~\ref{algo:b} successfully generated a tree with~$2n+1$
nodes. The generation took $\mathcal O(n)$ operations.
\end{lemma}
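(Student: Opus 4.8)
The plan is to track the lifecycle of a single successful run of Algorithm~\ref{algo:b}, that is, one call to ``A try for random binary tree'' that does not return FAIL, and bound the total number of elementary node operations it performs. I would first observe that such a run consists of exactly $n$ iterations of the main loop, and each iteration does two things: it calls the repointing operation once, and then applies one of $F1,\dots,F4$ once. The grafting operations $F1,\dots,F4$ are manifestly $\mathcal O(1)$: each creates one new binary node $b$ and one new leaf $\ell$, rewires the constant number of pointers among $p$, $n$, $b$, $\ell$, and discards the old point. So the entire cost of a successful generation is, up to an additive $\mathcal O(n)$, the sum of the costs of the $n$ repointing operations.

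The crux is therefore to bound the cost of repointing. Reading off the definition in Section~\ref{sec:repoint}, repointing a tree of $\mathcal B^\star$ walks up the list of ancestors of the pointed leaf $\ell$ until it meets the first ancestor that is a left child (blue case) or a right child (red case); its cost is proportional to the number of ancestors visited, i.e. to the length of that initial run of same-direction steps on the path from $\ell$ to the root. The key structural point, which makes this telescoping, is the one highlighted in the caption of Figure~\ref{fig:bijremy}: when we subsequently apply $F_k$ to the repointed tree $(T,v)$, the new node $b$ is inserted \emph{in the place of} $v$, so $b$ inherits $v$'s status (left/right child of $p$), and the freshly created pointed leaf $\ell'$ sits directly below $b$. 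Consequently, at the next iteration the ancestor path of $\ell'$ starts with $\ell'$ itself (a child of $b$) and then immediately reaches $b$, whose direction we already know. More to the point, the repointing at step $i+1$ re-traverses exactly the ancestors that were \emph{not} consumed at step $i$, minus the ones it now consumes: if $a_i$ denotes the depth of the pointed leaf at the start of iteration $i$ and $c_i$ the number of ancestors the $i$-th repointing visits, then the grafting adds at most a bounded number of new ancestors, so $a_{i+1} \le a_i - c_i + \mathcal O(1)$. Summing, $\sum_{i=1}^n c_i \le a_1 + \mathcal O(n) - a_{n+1} = \mathcal O(n)$, since $a_1 = \mathcal O(1)$ and $a_{n+1}\ge 0$. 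This is the amortization argument, and I expect it to be the only real obstacle: one has to verify carefully, case by case on which $F_k$ is applied, that grafting does not lengthen the relevant monochromatic ancestor run by more than a constant, and that the ``first left/right ancestor'' found at step $i+1$ is genuinely an ancestor that was skipped (not re-examined) at step $i$. A clean way to phrase it is to introduce a potential $\Phi_i$ equal to the depth of the current pointed leaf (or, slightly more robustly, the number of ancestors of the pointed leaf strictly above the last created node $b$), show $\Phi_i \ge 0$, $\Phi_1 = \mathcal O(1)$, and $c_i \le \Phi_i - \Phi_{i+1} + \mathcal O(1)$, whence $\sum c_i = \mathcal O(n)$ by telescoping.

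It remains only to note the bookkeeping overhead outside the main loop: initializing $\text{T}$ in line~\ref{line:start} is $\mathcal O(1)$, and the hypothesis of the lemma is precisely that the \texttt{While} loop of Algorithm~\ref{algo:b} terminated, so we are counting a single non-failing try. Adding the $\mathcal O(n)$ from the repointings, the $\mathcal O(n)$ from the graftings, and the $\mathcal O(1)$ of initialization gives the claimed $\mathcal O(n)$ bound on the number of operations of a successful generation, which is the statement of Lemma~\ref{lem:repoint}. (Note this lemma speaks only of the cost \emph{conditioned on success}; the expected number of retries, and hence the ``average linear time'' of Theorem~\ref{thm:algo}, is a separate matter handled via Lemma~\ref{lem:proba} and the success probability, and is not needed here.)
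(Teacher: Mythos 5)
Your proposal is correct and follows essentially the same route as the paper: it amortizes the repointing cost against the depth of the pointed node, noting that each graft increases this depth by $\mathcal O(1)$ while each repointing decreases it by the distance traveled, so the telescoping/potential argument bounds the total repointing work by $\mathcal O(n)$. The case-by-case verification you flag as a possible obstacle is immediate (the graft places the new pointed leaf exactly one level below the repointed node), so no genuine gap remains.
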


\begin{proof}
Let $T$ be the tree generated. Generating~$T$ takes $n$ graft operations and
$n$ repointing operations. One graft operation has complexity $\mathcal O(1)$;
however, repointing costs $\mathcal O(d)$, where $d$ is the distance traveled
by the point during repointing. We prove that the point travels a distance
$\mathcal O(n)$ overall.

Consider the depth in the tree~$T$ being generated of the pointed node. Each
graft operation adds one to the depth, while repointing operations reduce the
depth. As the depth is nonnegative, this means that the point traveled at most
a distance~$n$ in repointing. This concludes the proof.
\end{proof}

\begin{proof}[Proof of Theorem~\ref{thm:algo}]
By Lemma~\ref{lem:proba}, all color pointed trees with~$2n+1$ nodes are
generated with the same probability. Since every tree has $2n+2$ possible
colored points, every tree has the same probability of being generated as
well. This proves uniformity.

To prove the complexity, we first note that the number of binary trees with
$2n+1$ nodes is the Catalan number~$C_n$, which is asymptotically on the order
of $4^nn^{-3/2}$. By Lemma~\ref{lem:proba}, since every tree has $2n+2$
possible colored pointings, the probability for a run of
Algorithm~\ref{algo:b1} to reach size $2n+1$ is:
\[\frac{(2n+2)C_n}{2\cdot4^n}\approx n^{-1/2}.\]
This means that there are, on average, $\mathcal O(\sqrt n)$ failed runs.

Let us now find the expected cost of a failed run, by computing the
probability of failing at size $2k+1$ with $k < n$. By
Proposition~\ref{prop:repoint}, only one colored pointing of every tree with
$2k+1$ nodes causes Algorithm~\ref{algo:b1} to fail at the next step. By
Lemma~\ref{lem:proba}, the probability of failing at size~$2k+1$ is thus:
\[\frac{C_k}{2\cdot4^k}\approx k^{-3/2}.\]
By Lemma~\ref{lem:repoint}, the expected cost of a failed run is therefore:
\[\sum_{k=0}^{n-1} \mathcal O(k\cdot k^{-3/2}) = \mathcal O(\sqrt n).\]
Putting this together, generating a tree with $2n+1$ nodes involves, on
average, $\mathcal O(\sqrt n)$ runs of Algorithm~\ref{algo:b1}, each costing
$\mathcal O(\sqrt n)$ on average. Therefore, the average total cost is
$\mathcal O(n)$.
\end{proof}

The next algorithm is an improved version of Algorithm~\ref{algo:b}. It is
illustrated in Figure~\ref{fig:exrun}.

\begin{algorithm}[ht]
\caption{Efficient sampler of binary tree}
\label{algo:efficient}
\KwIn{A size $n$.}
\KwOut{A binary tree of size $2n+1$}
$\text{T}:=$ a red pointed leaf (with probability $1/2$ each)\\
\textbf{Repeat} $n$ \textbf{times:} \\
\hspace{1cm} $(\text{T}, v) =$ repoint T\\
\hspace{1cm} if $v=\bot$\\
\hspace{2cm} $v:=$ a uniformly chosen node of T. \\
\hspace{1cm}\textbf{Choose one of the following cases:} \\
\hspace{1cm}\textbf{With Probability} $\frac{1}{4}$,\\
\hspace{2cm} $\text{T} := F1(T,v) $\\
\hspace{1cm}\textbf{With Probability} $\frac{1}{4}$,\\
\hspace{2cm} $\text{T} := F2(T,v) $\\
\hspace{1cm}\textbf{With Probability} $\frac{1}{4}$,\\
\hspace{2cm} $\text{T} := F3(T,v) $\\
\hspace{1cm}\textbf{With Probability} $\frac{1}{4}$,\\
\hspace{2cm} $\text{T} := F4(T,v) $\\
\textbf{return} T
\end{algorithm}

\begin{theorem} \label{thm:efficient}
Algorithm~\ref{algo:efficient} draws uniformly a tree of size~$2n+1$ in
average time and space complexity $O(n)$. Moreover, let $B_n$ be the random
variable representing the number of random bits required by the algorithm; we
have on average:
\[B_n = 2n + \frac{\log^2n}{4\log2} + \mathrm o(\log^2n).\]
\end{theorem}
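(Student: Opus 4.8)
The plan is to establish correctness (uniformity) separately from the random-bit count, reusing the analysis of Algorithm~\ref{algo:b} as much as possible. For \textbf{uniformity}, observe that Algorithm~\ref{algo:efficient} differs from Algorithm~\ref{algo:b1} only in that, whenever repointing returns $v=\bot$, instead of failing it picks a uniformly random node of the current tree and continues. So a run never fails and always outputs a tree of size $2n+1$. I would argue by induction on the number of iterations that, conditioned on the current tree having $2k+1$ nodes, the pair (tree, colored point implicitly carried) is uniform over all color-pointed trees of that size; the key point is that the $\bot$-replacement step, combined with Proposition~\ref{prop:repoint} (which says each pointed tree $(T,v)$ arises from exactly one colored pointing, \emph{except} that $v=\bot$ corresponds to the unique blue-rightmost-leaf pointing), exactly restores the uniform distribution over $\mathcal B^\bullet$ that the graft operation $F$ needs as input. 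Hence by Lemma~\ref{lem:proba}'s reasoning every color-pointed tree of size $2n+1$ is equally likely, and since each tree has $2n+2$ colored points, the output tree is uniform.

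For the \textbf{time and space} bound, the argument of Lemma~\ref{lem:repoint} applies almost verbatim: each iteration costs $\mathcal O(1)$ for the graft plus $\mathcal O(d)$ for the repointing walk, and the depth of the pointed node increases by exactly one per graft and decreases by the repointing distance, so the total repointing distance is $\mathcal O(n)$ in the worst case — except that the $v=\bot$ branch resets the point to a uniformly random node, which can raise the depth by more than one. Here I would note that the expected depth of a uniform node in a uniform (or near-uniform) Catalan tree of size $2k+1$ is $\Theta(\sqrt k)$, and the number of times the $\bot$ branch is triggered is small (each trigger has probability $C_k/(2\cdot 4^k)=\Theta(k^{-3/2})$ as computed in the proof of Theorem~\ref{thm:algo}), so the extra contribution to the total travelled distance is $\sum_k \Theta(k^{-3/2})\cdot\Theta(\sqrt k)=\mathcal O(\log n)$ in expectation; thus the amortized time stays $\mathcal O(n)$ on average, and space is clearly $\mathcal O(n)$.

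The substantive part is the \textbf{random-bit count} $B_n = 2n + \frac{\log^2 n}{4\log 2} + \mathrm o(\log^2 n)$. Each of the $n$ iterations consumes: one draw in $\{1,2,3,4\}$ for the case of $F$ (exactly $2$ bits, contributing $2n$); occasionally a draw of a uniform node in $\{1,\dots,m\}$ where $m$ is the current size (only when $v=\bot$); and the random bits hidden inside the repointing walk if it is implemented via a random process — but in fact repointing is deterministic given the colored point, so the only randomness beyond the $2n$ bits is the initial coin and the uniform-node draws in the $\bot$ branch. The heart of the estimate is therefore: (i) count the expected number of $\bot$-events and (ii) bound the expected cost of drawing a uniform integer in $\{1,\dots,m\}$ in the random-bit model. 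For (i), a $\bot$-event at step producing size $2k+1$ happens with probability $\Theta(k^{-3/2})$ (proof of Theorem~\ref{thm:algo}), but one must be careful: these are not independent and the relevant quantity is the expected number of $\bot$-events over a full successful run, which I expect to be $\Theta(\log n)$ because the sizes visited are $3,5,\dots,2n+1$ and $\sum_{k\le n} k^{-1}$-type sums appear once one accounts for the size actually reached at each step. For (ii), drawing uniformly in $\{1,\dots,m\}$ by the Knuth–Yao / rejection method costs $\log_2 m + \mathcal O(1)$ bits on average, i.e. $\Theta(\log k)$ at a tree of size $\Theta(k)$. Multiplying, the expected surplus is $\sum \Theta(k^{-1})\cdot\Theta(\log k)$-style, and the precise bookkeeping of which sizes are reached and with what probability is what produces the clean constant $\frac{1}{4\log 2}$ in front of $\log^2 n$: roughly, $\Theta(\log n)$ uniform-node draws each of expected cost $\Theta(\log n)$, with the constants combining to $\frac{\log^2 n}{4\log 2}$.

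The main obstacle I anticipate is precisely this last constant-level accounting: one needs a sharp (not merely $\Theta$) estimate of the expected number of times the algorithm hits a size at which $v=\bot$, together with a sharp estimate of the average number of random bits used to sample a uniform node at that size, and then to sum these contributions with enough care that the $\frac{1}{4\log2}$ coefficient emerges and the remainder is genuinely $\mathrm o(\log^2 n)$. I would handle this by writing $B_n = 2n + \mathbb E[\text{initial coin}] + \sum_{\text{steps}} \mathbb E[\text{bits to resample at that step}]$, expressing the per-step expectation as (probability that step reaches a given size $2k+1$ and that repointing returns $\bot$) times (expected bits to draw in $\{1,\dots,2k+1\}$), using $C_k/(2\cdot4^k)\sim \frac{1}{2\sqrt{\pi}}k^{-3/2}$ and the fact that after $j$ successful grafts the size is $2j+1$ deterministically so the $\bot$-probability at step $j$ is $\sim \frac{1}{2\sqrt\pi} j^{-3/2}\cdot(\text{number of colored points})$ wait — more carefully, one tracks the distribution of the size reached, which because of the $\bot$-resampling is actually concentrated, and a second-moment or generating-function argument pins down the $\log^2 n$ term. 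I would cite standard Catalan asymptotics and the Knuth–Yao bound for the uniform-integer cost, and reserve the detailed summation for the body of the proof.
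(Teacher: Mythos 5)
There is a genuine gap, and it sits exactly at the point you yourself flag as ``the main obstacle'': you never identify the probability that the $\bot$-branch is taken at a given step, and the estimates you substitute for it are imported from the wrong algorithm. In Algorithm~\ref{algo:efficient} there is no rejection: at step~$i$ the size is deterministically $2i+1$ and, by the very uniformity you establish in your first paragraph, the colored pointed tree is uniform over all $(2i+2)C_i$ colored pointed trees of that size. Since, by Proposition~\ref{prop:repoint}, exactly one of the $2i+2$ pointings of each unpointed tree repoints to $\bot$ (the blue point on the rightmost leaf), the probability of having to resample a node at step~$i$ is exactly $1/(2i+2)$. The quantity $C_k/(2\cdot4^k)=\Theta(k^{-3/2})$ that you invoke repeatedly is the \emph{unconditional} failure probability of a run of Algorithm~\ref{algo:b1} (it already includes the $\Theta(k^{-1/2})$ probability of surviving to size $2k+1$), and your later correction ``$\frac{1}{2\sqrt\pi}j^{-3/2}\cdot(\text{number of colored points})\approx j^{-1/2}$'' is the survival probability itself, not the resampling probability; neither equals $1/(2j+2)$. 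Because of this, your count of $\bot$-events is only guessed to be $\Theta(\log n)$ (with a confused justification), your time-complexity correction ($O(\log n)$ extra travel) is based on the wrong probability (the correct extra travel is $\sum_i \sqrt i/(2i+2)=\Theta(\sqrt n)$, still harmless), and, most importantly, the coefficient $\frac1{4\log 2}$ is asserted rather than derived. The appeals to ``concentration of the size reached'' and ``a second-moment or generating-function argument'' are unnecessary and signal that the simple mechanism has not been pinned down.

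The repair is short and uses only ingredients you already have: with the exact probability $1/(2i+2)$ and an expected cost of $\log_2(2i+1)+\mathcal O(1)$ bits per uniform-node draw, the expected surplus over the $2n$ graft bits is
\[\sum_{i=0}^{n-1}\frac{\log_2(2i+1)+\mathcal O(1)}{2i+2}
=\frac{\log^2 n}{4\log 2}+\mathrm o(\log^2 n),\]
which is precisely the paper's computation (the paper writes the per-draw cost as $\lfloor\log_2(2i+1)\rfloor$ and evaluates the sum by comparison with $\int_0^n\frac{\log(2t+1)}{(2t+2)\log 2}\,dt$). Your uniformity argument and your reuse of the depth/potential argument of Lemma~\ref{lem:repoint} are correct and match the paper; what is missing is this one exact conditional probability, without which the stated asymptotics of $B_n$ does not follow from your text.
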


This result means that the number of random bits required is close to the
entropy~$2n$ in a very strong sense: the excess bits are of the order of
$\log^2n$, making the algorithm nearly optimal in this respect.

\begin{proof}
To correction of the algorithm is easily shown by remarking that, at each
step~$i$, the tree~$T$ is uniformly distributed among the colored pointed
trees of size~$2i+1$.

To show the complexity, we first note that every iteration of the loop
consumes $2$ random bits to choose between the four graft operations. This
adds up to $2n$ random bits. Now, consider a step~$i$. Since the tree~$T$ is
uniformly distributed and since, for every unpointed tree, only one colored
pointing causes $v$ to be equal to $\bot$, the probability of having to
randomly repoint the tree is $1/(2i+2)$. Moreover, since there are $2i+1$
possible choices for~$v$, the cost in random bits
is~$\lfloor\log_2(2i+1)\rfloor$. By summing over all steps~$i$, the average
number of random bits consumed in this way is:
\[\sum_{i=0}^{n-1} \frac{\lfloor\log_2(2i+1)\rfloor}{2i+2}
\sim\int_0^n\frac{\log(2t+1)}{(2t+2)\log 2}dt \sim \frac{\log^2 n}{4\log 2}.\]
We thus get the announced estimate.

Finally, we prove that the average time complexity is linear. The only
operation that might not be is the repointing. As in the proof of
Lemma~\ref{lem:repoint}, we consider the depth of the pointed node in the
tree~$T$ being genereated. This time, the depth is not only increased by the
graft operations, but also by the random repointings.

At step~$i$, there is a probability $1/(2i+2)$ of randomly repointing; after
this operation, since the tree~$T$ is random, the pointed node will have depth
on the order of~$\sqrt i$ on average. Thus, the total depth increase due to
the random repointings is of the order of:
\[\sum_{i=0}^{n-1}\frac{\sqrt i}{2i+2} \sim \mathcal O(\sqrt n).\]
As the depth of the pointed node is nonnegative, the point cannot travel more
than $n + \mathcal O(\sqrt n)$ nodes during repointing, which shows that the
generation takes linear time on average.
\end{proof}

\begin{figure}
\begin{center}
\begin{tabular}{l|c|c|c}
Loop index&Chosen Graft&Obtained tree&Repointed tree\\
0&&
\begin{tikzpicture}[baseline=(1),inner sep=0pt, circle, minimum size=0pt, scale=.6, xscale=0.6]
\node (r) at (.5,-.5) {};
\node[red, draw, minimum size=4pt] (1) at (1,-1) {\tiny $\nearrow$};
\draw[dotted] (r) to (1);
\end{tikzpicture}&
\begin{tikzpicture}[baseline=(1),inner sep=0pt, circle, minimum size=0pt, scale=.6, xscale=0.6]
\node (r) at (.5,-.5) {};
\node[draw, minimum size=5pt] (1) at (1,-1) {$\bullet$};
\draw[dotted] (r) to (1);
\end{tikzpicture}\\[.5cm]
1&$F4$ \begin{tikzpicture}[baseline=(1),inner sep=0pt, circle, minimum size=0pt, scale=.6, xscale=0.6]
\node (1) at (1,-1) {$\bullet$};
\node (chut) at (.5,-1.5) {};
\node[blue, draw, minimum size=4pt] (2) at (0,-2) {\tiny $\nwarrow$};
\draw (1) to (2);
\end{tikzpicture}&
\begin{tikzpicture}[baseline=(1),inner sep=0pt, circle, minimum size=0pt, scale=.6, xscale=0.6]
\node (r) at (.5,-.5) {};
\node (1) at (1,-1) {$\bullet$};
\node[blue, draw, minimum size=4pt] (2) at (0,-2) {\tiny $\nwarrow$};
\node (3) at (2,-2) {$\bullet$};
\draw[dotted] (r) to (1);
\draw (1) to (2);
\draw (1) to (3);
\end{tikzpicture}&
\begin{tikzpicture}[baseline=(1),inner sep=0pt, circle, minimum size=0pt, scale=.6, xscale=0.6]
\node (r) at (.5,-.5) {};
\node (1) at (1,-1) {$\bullet$};
\node[draw, minimum size=4pt] (2) at (0,-2) {$\bullet$};
\node (3) at (2,-2) {$\bullet$};
\draw[dotted] (r) to (1);
\draw (1) to (2);
\draw (1) to (3);
\end{tikzpicture}\\[1.15cm]
2&$F2$\begin{tikzpicture}[baseline=(1),inner sep=0pt, circle, minimum size=0pt, scale=.6, xscale=0.6]
\node (1) at (1,-1) {$\bullet$};
\node (chut) at (1.5,-1.5) {};
\node[blue, draw, minimum size=4pt] (2) at (2,-2) {\tiny $\nwarrow$};
\draw (1) to (2);
\end{tikzpicture}&
\begin{tikzpicture}[baseline=(1),inner sep=0pt, circle, minimum size=0pt, scale=.6, xscale=0.6]
\node (r) at (.5,-.5) {};
\node (1) at (1,-1) {$\bullet$};
\node (2) at (0,-2) {$\bullet$};
\node (3) at (2,-2) {$\bullet$};
\node (4) at (-1,-3) {$\bullet$};
\node[blue, draw, minimum size=4pt] (5) at (1,-3) {\tiny $\nwarrow$};
\draw[dotted] (r) to (1);
\draw (1) to (2);
\draw (1) to (3);
\draw (4) to (2);
\draw (5) to (2);
\end{tikzpicture}&
\begin{tikzpicture}[baseline=(1),inner sep=0pt, circle, minimum size=0pt, scale=.6, xscale=0.6]
\node (r) at (.5,-.5) {};
\node (1) at (1,-1) {$\bullet$};
\node[draw, minimum size=4pt] (2) at (0,-2) {$\bullet$};
\node (3) at (2,-2) {$\bullet$};
\node (4) at (-1,-3) {$\bullet$};
\node (5) at (1,-3) {$\bullet$};
\draw[dotted] (r) to (1);
\draw (1) to (2);
\draw (1) to (3);
\draw (4) to (2);
\draw (5) to (2);
\end{tikzpicture}\\[1.8cm]
3&$F3$\begin{tikzpicture}[baseline=(1),inner sep=0pt, circle, minimum size=0pt, scale=.6, xscale=0.6]
\node (1) at (1,-1) {$\bullet$};
\node (chut) at (.5,-1.5) {};
\node[red, draw, minimum size=4pt] (2) at (0,-2) {\tiny $\nearrow$};
\draw (1) to (2);
\end{tikzpicture}&
\begin{tikzpicture}[baseline=(1),inner sep=0pt, circle, minimum size=0pt, scale=.6, xscale=0.6]
\node (r) at (.5,-.5) {};
\node (1) at (1,-1) {$\bullet$};
\node (2) at (0,-2) {$\bullet$};
\node (3) at (2,-2) {$\bullet$};
\node[red, draw, minimum size=4pt] (4) at (-1,-3) {\tiny $\nearrow$};
\node (5) at (1,-3) {$\bullet$};
\node (6) at (0,-4) {$\bullet$};
\node (7) at (2,-4) {$\bullet$};
\draw[dotted] (r) to (1);
\draw (1) to (2);
\draw (1) to (3);
\draw (4) to (2);
\draw (5) to (2);
\draw (5) to (6);
\draw (5) to (7);
\end{tikzpicture}&
\begin{tikzpicture}[baseline=(1),inner sep=0pt, circle, minimum size=0pt, scale=.6, xscale=0.6]
\node (r) at (.5,-.5) {};
\node[draw, minimum size=4pt] (1) at (1,-1) {$\bullet$};
\node (2) at (0,-2) {$\bullet$};
\node (3) at (2,-2) {$\bullet$};
\node (4) at (-1,-3) {$\bullet$};
\node (5) at (1,-3) {$\bullet$};
\node (6) at (0,-4) {$\bullet$};
\node (7) at (2,-4) {$\bullet$};
\draw[dotted] (r) to (1);
\draw (1) to (2);
\draw (1) to (3);
\draw (4) to (2);
\draw (5) to (2);
\draw (5) to (6);
\draw (5) to (7);
\end{tikzpicture}\\[2.45cm]
4&$F2$\begin{tikzpicture}[baseline=(1),inner sep=0pt, circle, minimum size=0pt, scale=.6, xscale=0.6]
\node (1) at (1,-1) {$\bullet$};
\node (chut) at (1.5,-1.5) {};
\node[blue, draw, minimum size=4pt] (2) at (2,-2) {\tiny $\nwarrow$};
\draw (1) to (2);
\end{tikzpicture}&
\begin{tikzpicture}[baseline=(1),inner sep=0pt, circle, minimum size=0pt, scale=.6, xscale=0.6]
\node (r) at (.5,-.5) {};
\node (1) at (1,-1) {$\bullet$};
\node (2) at (0,-2) {$\bullet$};
\node[blue, draw, minimum size=4pt] (3) at (2,-2) {\tiny $\nwarrow$};
\node (4) at (-1,-3) {$\bullet$};
\node (5) at (1,-3) {$\bullet$};
\node (6) at (0,-4) {$\bullet$};
\node (7) at (-2,-4) {$\bullet$};
\node (8) at (-1,-5) {$\bullet$};
\node (9) at (1,-5) {$\bullet$};
\draw[dotted] (r) to (1);
\draw (1) to (2);
\draw (1) to (3);
\draw (4) to (2);
\draw (5) to (2);
\draw (4) to (6);
\draw (4) to (7);
\draw (8) to (6);
\draw (9) to (6);
\end{tikzpicture}&
\begin{tikzpicture}[baseline=(1),inner sep=0pt, circle, minimum size=0pt, scale=.6, xscale=0.6]
\node (r) at (.5,-.5) {};
\node (1) at (1,-1) {$\bullet$};
\node (2) at (0,-2) {$\bullet$};
\node (3) at (2,-2) {$\bullet$};
\node[draw, minimum size=4pt] (4) at (-1,-3) {$\bullet$};
\node (5) at (1,-3) {$\bullet$};
\node (6) at (0,-4) {$\bullet$};
\node (7) at (-2,-4) {$\bullet$};
\node (8) at (-1,-5) {$\bullet$};
\node (9) at (1,-5) {$\bullet$};
\draw[dotted] (r) to (1);
\draw (1) to (2);
\draw (1) to (3);
\draw (4) to (2);
\draw (5) to (2);
\draw (4) to (6);
\draw (4) to (7);
\draw (8) to (6);
\draw (9) to (6);
\node at (0,-3) {\phantom{(uniformly picked)}};
\useasboundingbox (current bounding box);
\node at (0,-6) {(uniformly picked)};
\end{tikzpicture}\\[3.75cm]
5&$F4$ \begin{tikzpicture}[baseline=(1),inner sep=0pt, circle, minimum size=0pt, scale=.6, xscale=0.6]
\node (1) at (1,-1) {$\bullet$};
\node (chut) at (.5,-1.5) {};
\node[blue, draw, minimum size=4pt] (2) at (0,-2) {\tiny $\nwarrow$};
\draw (1) to (2);
\end{tikzpicture}&
\begin{tikzpicture}[baseline=(1),inner sep=0pt, circle, minimum size=0pt, scale=.6, xscale=0.6]
\node (r) at (.5,-.5) {};
\node (1) at (1,-1) {$\bullet$};
\node (2) at (0,-2) {$\bullet$};
\node (3) at (2,-2) {$\bullet$};
\node[blue, draw, minimum size=4pt] (4) at (-1,-3) {\tiny $\nwarrow$};
\node (5) at (1,-3) {$\bullet$};
\node (6) at (0,-4) {$\bullet$};
\node (7) at (2,-4) {$\bullet$};
\node (8) at (-1,-5) {$\bullet$};
\node (9) at (1,-5) {$\bullet$};
\node (10) at (0,-6) {$\bullet$};
\node (11) at (2,-6) {$\bullet$};
\draw[dotted] (r) to (1);
\draw (1) to (2);
\draw (1) to (3);
\draw (4) to (2);
\draw (5) to (2);
\draw (5) to (6);
\draw (5) to (7);
\draw (8) to (6);
\draw (9) to (6);
\draw (9) to (10);
\draw (9) to (11);
\end{tikzpicture}&
\end{tabular}
\end{center}
\caption{An example of a run of Algorithm~\ref{algo:efficient} producing a
tree with $2\times5+1 = 11$ nodes.}
\label{fig:exrun}
\end{figure}

\subsection{Motzkin trees}


\begin{algorithm}
\caption{A try for random unary-binary tree}
\label{algo:m1}
\KwIn{A size $n$.}
\KwOut{A unary-binary tree of size $n$, $n+1$ or FAIL}
$\text{T}:=$ a blue or red pointed leaf (with probability $1/2$ each)\\
\textbf{While} $\abs{\text{T}}<n$ \textbf{do:} \\
\hspace{.5cm}\textbf{Choose one of the following cases:} \\
\hspace{1cm}\textbf{With Probability} $\frac{1}{3}$,\\
\hspace{1.5cm} $\text{T} := G1(T) $\\
\hspace{1cm}\textbf{With Probability} $\frac{1}{3}$,\\
\hspace{1.5cm} $(\text{T}, v) =$ repoint T\\
\hspace{1.5cm} if $v=\bot$ \textbf{return} FAIL\\
\hspace{1.5cm} $\text{T} := G2(T,v) $\\
\hspace{1cm}\textbf{With Probability} $\frac{1}{9}$,\\
\hspace{1.5cm} $(\text{T}, v) =$ repoint T\\
\hspace{1.5cm} if $v=\bot$ \textbf{return} FAIL\\
\hspace{1.5cm} $\text{T} := G3(T,v) $\\
\hspace{1cm}\textbf{With Probability} $\frac{1}{9}$,\\
\hspace{1.5cm} $(\text{T}, v) =$ repoint T\\
\hspace{1.5cm} if $v=\bot$ \textbf{return} FAIL\\
\hspace{1.5cm} $\text{T} := G4(T,v) $\\
\hspace{1cm}\textbf{With Probability} $\frac{1}{9}$,\\
\hspace{1.5cm} $(\text{T}, v) =$ repoint T\\
\hspace{1.5cm} if $v=\bot$ \textbf{return} FAIL\\
\hspace{1.5cm} $\text{T} := G5(T,v) $\\
\textbf{return} T
\end{algorithm}

\begin{algorithm}
\caption{Random unary-binary tree}
\label{algo:m}
\KwIn{A size $n$.}
\KwOut{A unary-binary tree of size $n$}
$\text{T}= $ A try for random binary tree $(n)$\\
\textbf{While} $\text{T} = $ FAIL or $\abs{T}=n+1$ \textbf{do:} \\
\hspace{1cm} $\text{T}= $ A try for random binary tree $(n)$\\
\textbf{return} T
\end{algorithm}

\begin{theorem} \label{thm:algom}
Algorithm~\ref{algo:m} outputs a uniformly distributed Motzkin tree with~$n$ nodes. It
runs in average linear time.
\end{theorem}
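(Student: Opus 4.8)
The plan is to mirror the proof of Theorem~\ref{thm:algo}, the only genuinely new feature being that the grafts $G1,\dots,G5$ increase the size by one ($G1$, $G2$) or by two ($G3,G4,G5$), so the sequence of sizes produced by a run is no longer fixed in advance. First I would prove the analogue of Lemma~\ref{lem:proba}: viewing a run of Algorithm~\ref{algo:m1}, ignoring the stopping test, as a random walk $T_0,T_1,\dots$ on color pointed Motzkin trees that dies whenever a repointing returns $\bot$, the walk visits a given $T\in\cM^\star_k$ with probability exactly $1/(2\cdot3^{k-1})$. Since $\abs{T_{i+1}}>\abs{T_i}$, each tree is visited at most once, so an induction on $k$ suffices; the base case $k=1$ is the $1/2$ of the starting line, and for $k\ge2$ the bijections $G$, $G^{-1}$ together with Proposition~\ref{prop:repoint} show that $T$ has exactly one predecessor $S$ — a unique $S\in\cM^\star_{k-1}$ if the last step was $G1$, a unique $S\in\cM^\star_{k-1}$ (the inverse repointing of $G2^{-1}(T)$) if it was $G2$, a unique $S\in\cM^\star_{k-2}$ if it was $G3$, $G4$ or $G5$ — and the repointing needed to recover $S$ never fails, because the unique $\bot$-causing coloring (blue, on the rightmost leaf) is not the repointing of any pointed tree. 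As the operation at each step is drawn independently, $\mathbb P(T)=\tfrac13\mathbb P(S)$ in the first two cases and $\tfrac19\mathbb P(S)$ in the last three, and since a one-node step costs a factor $\tfrac13$ and a two-node step $\tfrac19=(\tfrac13)^2$, the induction closes.

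Next I would deduce uniformity. Algorithm~\ref{algo:m1} halts exactly when the size first reaches $n$, so it returns a tree $T\in\cM^\star_n$ if and only if the walk visits $T$; hence every $T\in\cM^\star_n$ is output with probability $1/(2\cdot3^{n-1})$. Algorithm~\ref{algo:m} discards the FAIL outputs and the size-$(n+1)$ outputs, so it returns a uniform element of $\cM^\star_n$. Finally, a Motzkin tree with $n$ nodes, having $b$ binary nodes, $u$ unary nodes and $\ell$ leaves, satisfies $\ell=b+1$ and $2\ell+u=n+1$, so it carries exactly $n+1$ colored points; forgetting the point is therefore an $(n+1)$-to-one map, and the output of Algorithm~\ref{algo:m} is uniform over Motzkin trees with $n$ nodes.

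For the average linear time I would reproduce the three-part estimate of Theorem~\ref{thm:algo}, using that $M_k$, the number of Motzkin trees with $k$ nodes, is $\Theta(3^k k^{-3/2})$ (the generating function satisfies $M(z)=z+zM(z)+zM(z)^2$, with dominant singularity $z=1/3$). A single try succeeds with probability $\abs{\cM^\star_n}/(2\cdot3^{n-1})=(n+1)M_n/(2\cdot3^{n-1})=\Theta(n^{-1/2})$, so there are $\Theta(\sqrt n)$ tries on average. The analogue of Lemma~\ref{lem:repoint} holds by tracking the depth of the pointed node: each graft changes it by $\mathcal O(1)$, and over $m$ iterations the total distance traversed by repointings is at most $(D_0-D_m)+\mathcal O(m)=\mathcal O(m)=\mathcal O(n)$ — the one new phenomenon being that a repointing started from a green point moves the point down by one, which the $\mathcal O(1)$ term absorbs — so every try costs $\mathcal O(n)$ and a try failing at size $k$ costs $\mathcal O(k)$. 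To bound the cost charged to rejected tries, Proposition~\ref{prop:repoint} gives exactly one $\bot$-causing coloring per unpointed tree, so the walk sits at a $\bot$-causing configuration of size $k$ with probability $M_k/(2\cdot3^{k-1})=\Theta(k^{-3/2})$ and thus FAILs at size $k<n$ with probability $\Theta(k^{-3/2})$, while an overshoot to size $n+1$ forces the walk first to reach size $n-1$, which happens with probability $\Theta(n^{-1/2})$. Hence the expected cost of one try is $\mathcal O(n)\cdot\Theta(n^{-1/2})+\sum_{k<n}\Theta(k^{-3/2})\cdot\mathcal O(k)+\Theta(n^{-1/2})\cdot\mathcal O(n)=\mathcal O(\sqrt n)$, and the expected total cost is $\mathcal O(\sqrt n)/\Theta(n^{-1/2})=\mathcal O(n)$.

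The main obstacle is the bookkeeping forced by the two possible strides: the probability lemma needs the two-term induction above rather than a one-step recurrence, the complexity analysis must treat the extra ``overshoot to size $n+1$'' rejection mode that has no counterpart in the binary case, and the depth argument must remain valid when a repointing moves the pointed node downward instead of upward. None of these is deep, but each must be handled with care.
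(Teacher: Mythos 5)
Your proposal is correct and follows essentially the same route as the paper: the same induction giving each color pointed tree of size $k$ probability $1/(2\cdot3^{k-1})$, the same uniformity argument via the $n+1$ colored pointings per Motzkin tree, the same depth-of-the-pointed-node bound for the repointing cost, and the same rejection analysis with failure probability $\Theta(k^{-3/2})$ at size $k<n$ plus a $\Theta(n^{-1/2})$ overshoot term. The only (harmless) deviations are that you spell out the two-stride induction the paper leaves as ``by induction'' and bound the overshoot via the probability of reaching size $n-1$ rather than by counting pointed trees of size $n+1$.
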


The proof is very similar to that of Theorem~\ref{thm:algo}. In the following,
we skim over the similar parts to focus on the differences. We start by
stating two lemmas that are analogues to Lemmas~\ref{lem:proba} and
\ref{lem:repoint}.

\begin{lemma} \label{lem:probam}
Let $T$ be a color pointed unary-binary tree with $n$ nodes. A run of
Algorithm~\ref{algo:m1} reaches $T$ with probability:
\[\mathbb P(T) = \frac1{2\cdot3^{n-1}}.\]
\end{lemma}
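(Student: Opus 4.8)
The plan is to prove Lemma~\ref{lem:probam} by induction on the number of nodes~$n$, exactly mirroring the structure of the (omitted) proof of Lemma~\ref{lem:proba}. The base case is $n=1$: a run of Algorithm~\ref{algo:m1} starts by creating a blue or red pointed leaf, each with probability $1/2$, and this is the only way to reach a color pointed tree with one node; since $2\cdot 3^{0}=2$, the claimed probability $1/2$ is correct. For the inductive step, fix $n\ge 2$ and a color pointed tree~$T$ with $n$ nodes. I would argue that the last operation applied in a run reaching~$T$ is uniquely determined by~$T$, using the fact that $G$ (viewed as the collection $G1,\dots,G5$ together with the repointing step) is a bijection onto $\cM^\star_n$.

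The key step is the case analysis on the type of color point of~$T$, reading off the inverse bijection $G^{-1}$ described in the proof following the definition of~$G$. Concretely: if $T$ is red or blue pointed on an only child, the last operation was $G1$ applied (after repointing) to a tree of $\cM^\star_{n-1}$; if $T$ is red pointed on a right leaf, the last operation was $G1$ applied to a green-pointed tree of $\cM^\star_{n-1}$; if $T$ is green pointed on a unary node, it was $G2$ applied to a tree of $\cM^\bullet_{n-1}$, i.e.\ to a repointed tree of $\cM^\star_{n-1}$; and if $T$ is blue pointed on a right leaf, red pointed on a left leaf, or blue pointed on a left leaf, it was respectively $G3$, $G4$, or $G5$ applied to a repointed tree of $\cM^\star_{n-2}$. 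In each case there is a \emph{unique} predecessor tree $T'$ (of size $n-1$ or $n-2$) and a unique operation among $G1,\dots,G5$ that produces~$T$ from~$T'$, so the probability of reaching~$T$ is the probability of reaching~$T'$ times the probability that the algorithm selects that operation at the corresponding step (and, when repointing is involved, times~$1$ since Proposition~\ref{prop:repoint} guarantees the repointing is deterministic and always succeeds here because the needed node~$v$ exists). The operation probabilities are $1/3$ for $G1$ and $G2$ and $1/9$ for $G3$, $G4$, $G5$.

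Plugging in the inductive hypothesis: in the $G1$ and $G2$ cases, $\mathbb P(T) = \frac{1}{3}\cdot\mathbb P(T') = \frac{1}{3}\cdot\frac{1}{2\cdot 3^{(n-1)-1}} = \frac{1}{2\cdot 3^{n-1}}$; in the $G3$, $G4$, $G5$ cases, $\mathbb P(T) = \frac{1}{9}\cdot\mathbb P(T') = \frac{1}{9}\cdot\frac{1}{2\cdot 3^{(n-2)-1}} = \frac{1}{2\cdot 3^{n-1}}$. Either way the claimed value is obtained, completing the induction.

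The main obstacle is not computational but bookkeeping: one must be careful that, for a tree $T \in \cM^\star_n$ with $n \ge 2$, the reverse step really is forced and never collides with the ``stop'' condition $\abs T = n$ of the loop (this is fine, since the loop in Algorithm~\ref{algo:m1} only halts once the target size is reached, so a tree of size exactly~$n$ with $n\ge 2$ is necessarily reached from a strictly smaller tree), and that the repointing step invoked inside $G2$--$G5$ indeed produces a valid pointed node rather than $\bot$ for the predecessor in question — but this is automatic because applying $G^{-1}$ to $T$ and then repointing is exactly the identity guaranteed by composing the bijections, so the predecessor is a genuine pointed tree on which the forward operation is well-defined. I would state this once and then let the five-way case split run its course.
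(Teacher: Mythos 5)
Your proof is correct and follows exactly the route the paper intends: the paper's proof of Lemma~\ref{lem:probam} is just ``by induction on $n$,'' and your argument fills in that induction faithfully, using the bijection $G$ (via $G^{-1}$'s case split on the color point) together with the determinism of the repointing bijection of Proposition~\ref{prop:repoint} to identify the unique predecessor and branch probability ($1/3$ or $1/9$) in each case. Nothing further is needed.
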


\begin{proof}
By induction on $n$.
\end{proof}

\begin{lemma} \label{lem:repointm}
Assume that Algorithm~\ref{algo:m} successfully generated a tree with~$n$
nodes. The generation took $\mathcal O(n)$ operations.
\end{lemma}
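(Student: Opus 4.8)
The plan is to mirror the proof of Lemma~\ref{lem:repoint} almost verbatim, tracking a potential related to the depth of the colored point in the tree being grown. Recall that a successful run of Algorithm~\ref{algo:m} performs a sequence of $G1$, $G2$, $G3$, $G4$, $G5$ operations, each preceded by a repointing step. There are at most $n$ such operations since each increases the size by one or two. Each graft operation has complexity $\mathcal O(1)$ by the assumptions at the start of Section~\ref{sec:random} (including drawing a random integer in $\{1,\dots,9\}$). The only operation whose cost is not obviously constant is the repointing, whose cost is $\mathcal O(d)$ where $d$ is the distance the point travels from its current leaf/unary position up to the ancestor $v$ selected by the repointing rule of Proposition~\ref{prop:repoint}. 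So, as in Lemma~\ref{lem:repoint}, it suffices to bound the total distance traveled by the point across all repointing steps by $\mathcal O(n)$.

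The key observation is that the depth of the colored point in the current tree is a nonnegative integer that increases by a bounded amount at each grafting step and decreases (weakly) at each repointing step. Concretely: inspecting the operations $G1$--$G5$ (Figures~\ref{fig:bijmotz}), each one places the new colored point either on a newly created leaf that is a child or grandchild of the old pointed node, or on a newly created unary node just above it; in every case the depth of the point increases by at most $2$. On the other hand, a repointing step moves the point from a leaf (or a unary node) to one of its ancestors, so it can only decrease the depth (it may leave it unchanged when the selected ancestor is the point itself, e.g.\ the green-pointed case where $v$ is the child of $u$, but then the distance traveled is $\mathcal O(1)$ and there are at most $n$ such steps). Since there are at most $n$ grafts, the depth of the point never exceeds $2n$; since the depth is always nonnegative and the only thing that decreases it is repointing, the total decrease — hence the total distance traveled by the point during all repointing steps — is at most $2n + \mathcal O(n) = \mathcal O(n)$.

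Combining: the generation performs $\mathcal O(n)$ graft operations at cost $\mathcal O(1)$ each, plus repointing steps of total cost $\mathcal O(n)$, for a total of $\mathcal O(n)$ operations, which is the claim.

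I expect the only delicate point to be the bookkeeping of ``depth increase $\le 2$ per graft'': one must check this uniformly over $G1$ (three subcases depending on the color of the point), $G2$, and $G3/G4/G5$, and also confirm that the repointing step never increases the depth — this is immediate from the fact that $v$ is always an ancestor of the old pointed node (or the child of a pointed unary node, one level below, which accounts for the harmless $\mathcal O(1)$ per step). Everything else is identical in spirit to Lemma~\ref{lem:repoint}, so I would simply write ``the proof is the same as that of Lemma~\ref{lem:repoint}, replacing graft operations by the operations $G1$--$G5$ and noting that each of them increases the depth of the point by at most two'' and leave the case check to the reader.

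\begin{proof}
The proof is essentially the same as that of Lemma~\ref{lem:repoint}. Let $T$
be the tree generated. Its generation consists of at most $n$ graft operations
chosen among $G1,\dots,G5$ (each increasing the size by one or two), each
preceded by a repointing operation. A graft operation, as well as drawing the
random integer selecting it, has complexity $\mathcal O(1)$; a repointing
operation costs $\mathcal O(d)$, where $d$ is the distance traveled by the
colored point. It therefore suffices to show that the point travels a total
distance $\mathcal O(n)$ during all the repointing operations.

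Consider the depth in~$T$ of the colored point. Inspecting the operations
$G1,\dots,G5$ (see Figure~\ref{fig:bijmotz}), each one moves the point to a
newly created node that is at distance at most~$2$ below the previous pointed
node; hence each graft increases the depth of the point by at most~$2$. A
repointing operation, on the other hand, moves the point from a leaf or a
unary node to one of its ancestors (or, in the green case, to the child of the
pointed unary node, at cost $\mathcal O(1)$), so it never increases the depth
by more than $\mathcal O(1)$, and the remaining moves strictly decrease it.
Since there are at most~$n$ grafts, the depth of the point stays in
$[0, 2n + \mathcal O(n)]$; as it is nonnegative and only repointing decreases
it, the total distance traveled by the point during repointing is
$\mathcal O(n)$. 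Combining, the generation takes $\mathcal O(n)$ operations.
\end{proof}
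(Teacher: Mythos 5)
Your proof is correct and follows essentially the same route as the paper's: treat the depth of the colored point as a nonnegative potential, note that each graft (and the green-case repointing) increases it by a bounded amount while upward repointing decreases it, and conclude the point travels $\mathcal O(n)$ total during repointing. The paper's accounting is only slightly finer (it observes that a green point can only arise from a $G2$, so the depth increases by at most one unit per iteration), whereas you use the cruder but equally sufficient bound of at most two per graft plus $\mathcal O(1)$ for the green repointing.
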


\begin{proof}
Like in the proof of Lemma~\ref{lem:repoint}, we need to prove that, in the
course of the algorithm, the point only travels a distance $\mathcal O(n)$
during repointing. We again consider the depth of the pointed node in the
tree~$T$ being generated. This depth:
\begin{itemize}
\item increases by one every time one of the operations $G1$, $G3$, $G4$ or
$G5$ is applied;
\item increases by one every time a green pointed node is repointed;
\item stays the same if $G2$ is applied;
\item decreases if the point travels upward during repointing.
\end{itemize}
Moreover, a green node can only be produced by $G2$, which means that the
depth of the point increases at most one unit per iteration. Since the depth
is nonnegative, this means that the point can only travel upward $n$~times at
most.
\end{proof}

\begin{proof}[Proof of Theorem~\ref{thm:algo}]
By Lemma~\ref{lem:probam}, all color pointed trees with~$n$ nodes are
generated with the same probability; since every tree has $n+1$ possible
pointings, uniformity is proven.

The number of Motzkin trees with $n$ nodes is the $n$th Motzkin number $M_n$,
which is on the order of~$3^nn^{-3/2}$. Using Lemma~\ref{lem:probam}, we
prove, in a manner identical to the proof of Theorem~\ref{thm:algo}, that the
probability of success of a run of Algorithm~\ref{algo:m1} is:
\[\frac{(n+1)M_n}{2\cdot3^{n-1}}\approx n^{-1/2},\]
so that there are on average $\mathcal O(\sqrt n)$ failed runs.

Let us find the average cost of a failed run. If $k < n$,
Proposition~\ref{prop:repoint} and \ref{lem:probam} show that the probability
of failing at size $k$ is:
\[\frac{M_k}{2\cdot 3^{k-1}}\approx k^{-3/2}.\]
Moreover, there are $(n+2)M_{n+1}$ color pointed trees of size~$n+1$. By
Lemma~\ref{lem:probam}, the probability of failing at size~$n+1$ is at most:
\[\frac{(n+2)M_{n+1}}{2\cdot3^{n-1}}\approx n^{-1/2}.\]
By Lemma~\ref{lem:repointm}, the average cost of a failed run is thus:
\[\sum_{k=0}^{n-1} \mathcal O(k\cdot k^{-3/2}) + \mathcal O(n\cdot n^{-1/2}) =
\mathcal O(\sqrt n).\]
The total average cost of Algorithm~\ref{algo:m} is thus $\mathcal O(\sqrt
n)$ times $\mathcal O(\sqrt n)$, or $\mathcal O(n)$.
\end{proof}

\subsection{Data structures and practical implementation} \label{sec:structure}

To implement our algorithm, we represented trees as an array of nodes, each
node consisting of four values:
\begin{itemize}
\item one flag representing its arity (leaf, unary, or binary),
\item the index of its parent (with a special value for the root),
\item two indices for its eventual children.
\end{itemize}
Thus, a node requires 128~bits of storage if the tree has less than $2^{32}$
nodes. With this information, all required operations can be done in constant
time.

Our implementation is able to sample a tree of size 10~million in about
5~seconds on a standard PC.


\section{Weighted specification} \label{sec:weight}
In order to simulate other distribution we might want to put different weights to different types of node. We present in this section how to adapt the bijection and the random sampler for the class $\cA$ of unary-binary trees with a weight $\cU$ on unary nodes. 
\begin{align}
\label{eq:biv}
\cA^\star=2\cZ + \cZ \cU \cA^\star + \cZ \cU \cA^\bullet + 4\cZ ^2 \cA^\bullet - \cZ ^2 \cU ^2 \cA^\bullet
\end{align}

In fact, the negative term $\cZ ^2 \cU ^2 \cM^\bullet$ can be rewritten as $\cZ \cU (\cZ \cU \cM^\bullet) $ which, in the bijection for Motzkin tree correspond to the operation of creating a green pointed unary node and then transform it in a binary node. So the equation~\ref{eq:biv} become:

\begin{align*}
\cA^\star=2\cZ + \cZ \cU (\cA^{\textcolor{blue}{(\ell)}}+\cA^{\textcolor{red}{(\ell)}}) + \cZ \cU \cA^\bullet + 4\cZ ^2 \cA^\bullet
\end{align*}

where  $\cA^{\textcolor{blue}{(\ell)}}$ is the class of weighted unary-binary tree blue pointed on a leaf and $\cA^{\textcolor{red}{(\ell)}}$ the class of weighted unary-binary tree red pointed on a leaf.

From this, we describe the bijection $H: 2\cZ + \cZ \cU (\cA^{\textcolor{blue}{(\ell)}}+\cA^{\textcolor{red}{(\ell)}})  + \cZ \cU \cA^\bullet + 4\cZ ^2 \cA^\bullet \rightarrow \cA^\star $, very similar to $G$:

\begin{itemize}
\item $H1_{n,k}: \cA^{\textcolor{red}{(\ell)}}_{n-1,k-1} \rightarrow \cA^\star_{n,k}$
We start from a tree red pointed on a leaf $\ell$, we call $p$ the parent of $\ell$. The operation $G1_n$ consists in adding a new unary node as the new parent of $\ell$ and (left, right or only according to what $\ell$ was) child of $p$. We keep the red point on $\ell$. 
\item $H2_{n,k}: \cA^{\textcolor{blue}{(\ell)}}_{n-1,k-1} \rightarrow \cA^\star_{n,k}$
We start from a tree blue pointed on a leaf $\ell$, we do a similar modification and keep the blue point on $\ell$. 
\item $H3_{n,k}: \cA^\bullet_{n-1,k-1} \rightarrow \cA^\star_{n}$
We start from a tree pointed on a node $n$, of parent $p$. The new tree is formed by adding a new node $u$ which become the new parent of $n$ and the (left, right or only according to what $n$ was) child of $p$. The tree is now considered green pointed on $u$. 
\item $H4_{n,k}: \cA^\bullet_{n-2, k} \rightarrow \cA^\star_{n,k}$
We start from a tree pointed on a node $n$, of parent $p$. We create a new binary node $b$ and a new leaf $\ell$ such that $b$ become the parent of $n$, $p$ become the parent of $b$ and $\ell$ is the right child of $b$. The new tree is considered is blue pointed on $\ell$.
\item $H5_{n,k}: \cA^\bullet_{n-2,k} \rightarrow \cA^\star_{n,k}$
We start from a tree pointed on a node $n$, of parent $p$. We create a new binary node $b$ and a new leaf $\ell$ such that $b$ become the parent of $n$, $p$ become the parent of $b$ and $\ell$ is the left child of $b$. The new tree is considered is red pointed on $\ell$.
\item $H6_{n,k}: \cA^\bullet_{n-2,k} \rightarrow \cA^\star_{n,k}$
We start from a tree pointed on a node $n$, of parent $p$. We create a new binary node $b$ and a new leaf $\ell$ such that $b$ become the parent of $n$, $p$ become the parent of $b$ and $\ell$ is the left child of $b$. The new tree is considered is blue pointed on $\ell$.
\item $H7_{n,k}: \cA^\bullet_{n-2, k} \rightarrow \cA^\star_{n,k}$
We start from a tree pointed on a node $n$, of parent $p$. We create a new binary node $b$ and a new leaf $\ell$ such that $b$ become the parent of $n$, $p$ become the parent of $b$ and $\ell$ is the right child of $b$. The new tree is considered is red pointed on $\ell$.
\end{itemize}

With this specifiaction we can immediately obtain a random sampler, similar to Algorithm~\ref{algo:m1} with a extra rejection when we want to grow a green pointed tree. However, this algorithm will be in average complexity exponential and so is not as efficient than classical algorithms (like for example, bivariate Boltzmann samplers \cite{bodiniponty}, anticipated rejection \cite{Barcucci19953} or methods via Łukasiewicz words \cite{bodinicissedavid}). A perspective work of the authors is to tune this sampler in order to obtain a good complexity for weighted specifications. 

\section{Conclusion}
%

We presented two new uniform random samplers for binary and
unary-binary trees which respectively consume $2n+\Theta(\ln(n)^2)$
(resp. $\Theta(n)$) random bits in average. These samplers are based on
holonomic specifications that replace the branching process with a local growing process (grafting process).

As every algebraic generating function is also holonomic, we can
imagine defining a whole
family of combinatorial objects which can be drawn using grafting
processes. The main drawback is the need to be able to find a combinatorial
interpretation for the holonomic equations. It is not necessary always
possible, and even for simple combinatorial objects this is not
elementary. The first extensions we expect to tackle are the classes
of constrained words, such as the class of words on the alphabet
$\{a,b\}$ having the same number of $a$ and $b$ and all the variations
around this class. Other classes bring traditionally holonomic
equations, such as constrained paths, meanders or permutations with
forbidden patterns. Each of these classes presents a challenge, and finding an optimal generator for any of them would be an important step for the general framework of random sampling. 

\bibliographystyle{plain}
\bibliography{biblio}

\end{document}